\newcommand{\beq}{\begin{eqnarray*}}
\newcommand{\eeq}{\end{eqnarray*}}
\renewcommand{\theequation}{\thesection.\arabic{equation}}
\def\eqnarray{%
\stepcounter{equation}%
\let\@currentlabel=\theequation
\global\@eqnswtrue
\global\@eqcnt\z@
\tabskip\@centering
\let\\=\@eqncr
$$\halign to \displaywidth\bgroup\@eqnsel\hskip\@centering
$\displaystyle\tabskip\z@{##}$&\global\@eqcnt\@ne
\hfil$\displaystyle{{}##{}}$\hfil
&\global\@eqcnt\tw@$\displaystyle\tabskip\z@{##}$\hfil
\tabskip\@centering&\llap{##}\tabskip\z@\cr}
\newtheorem{theorem}{Theorem}[section]
\newtheorem{lemma}[theorem]{Lemma}
\newtheorem{corollary}[theorem]{Corollary}
\newtheorem{proposition}[theorem]{Proposition}
\newsavebox{\toy}
\savebox{\toy}{\framebox[0.65em]{\rule{0cm}{1ex}}}
\newcommand{\QED}{\usebox{\toy}}
\def\nlni{\par\ifvmode\removelastskip\fi\vskip\baselineskip\noindent}
\newenvironment{proof}{\nlni\begingroup\it Proof.\rm}{
\endgroup\vskip\baselineskip}
\begin{document}
%%%%%%% DOUBLE SPACED %%%%%%%%
\setlength{\baselineskip}{15pt}
\title{
Shape of eigenvectors 
for the decaying potential model\\
}
\author{
Fumihiko Nakano
\thanks{
Mathematical Institute,
Tohoku University,
Sendai 980-8578, Japan
e-mail : 
fumihiko.nakano.e4@tohoku.ac.jp
}
}
%\date{最終更新日：}
\maketitle
%第一ページの番号を消す
%\thispagestyle{empty}
%%%%%%% ABSTRACT %%%%%%%%%%%%%
\begin{abstract}
We 
consider the 1d Schr\"odinger operator with decaying random potential, and study the joint scaling limit of the eigenvalues and the measures associated with the corresponding eigenfunctions which is based on the formulation by Rifkind-Virag 
\cite{RV}.
As a result, 
we have 
completely different behavior depending on the decaying rate
$\alpha > 0$
of the potential : the limiting measure is equal to 
(1)
Lebesgue measure for the 
super-critical case 
($\alpha > 1/2$),  
(2)
a measure of which the density has power-law decay with Brownian fluctuation for critical case
($\alpha=1/2$), 
and 
(3)
the delta measure with its atom being uniformly distributed for 
the sub-critical case
($\alpha<1/2$).
This result is 
consistent with previous study on spectral and statistical properties. 
\end{abstract}

%Mathematics Subject Classification (2000): 82B44, 81Q10

%\tableofcontents
%%%%% INTRODUCTION %%%%%%%%%%%%%%%%%%%%%%%%%%%%%%%%%
\section{Introduction}
In this paper
we consider the one-dimensional Schr\"odinger operator with random decaying potential : 
\beq
H := 
- \frac {d^2}{dt^2} + a(t) F(X_t) 
\eeq
where 
$a \in C^{\infty}({\bf R})$, 
$a(-t) = a(t)$, 
$a(t)$ 
is monotone decreasing for 
$t>0$ 
and 
\beq
a(t) = t^{- \alpha} (1 + o(1)), 
\quad
t \to \infty
\eeq
for some 
$\alpha > 0$.
For 
$\alpha < 1/2$, 
we also need 
$a'(t) = {\cal O}(t^{- \alpha - 1})$
to use the results in 
\cite{KN2}.
$F \in C^{\infty}(M)$
is a smooth function on a torus 
$M$
such that 
\beq
\langle F \rangle
:=
\int_M F(x) dx = 0
\eeq
and 
$\{ X_t \}_{t \in {\bf R}}$
is the Brownian motion on 
$M$.
Since 
$a(t) F(X_t)$
is a compact perturbation with respect to 
$(-\triangle)$, 
the spectrum  
$\sigma (H) \cap (-\infty, 0)$
on the negative real axis is discrete.
The spectrum 
$\sigma (H) \cap [0, \infty)$ 
on the positive real axis is 
\cite{KU} : 
\beq
\sigma (H) \cap [0, \infty)
\mbox{ is }
\left\{
\begin{array}{lc}
\mbox{ a.c. } & (\alpha>1/2) \\
\mbox{ p.p. on }[0, E_c] \mbox{ and s.c. on }[E_c, \infty) 
& (\alpha=1/2) \\
\mbox{ p.p. } & (\alpha < 1/2) \\
\end{array}
\right.
\eeq
where 
$E_c$ 
is a deterministic constant.
In fact, 
it is shown in 
\cite{KU} 
that the generalized eigenfunctions of 
$H$
are bounded for super-critical case($\alpha > 1/2$), 
have power-law decay for critical case($\alpha=1/2$) 
and are sub-exponentially localized for sub-critical case($\alpha<1/2$). 
For the 
level statistics problem, 
%the point process 
we consider the point process 
$\xi_{n, E_0}$ 
composed of the rescaling eigenvalues 
$\{ n(\sqrt{E_j (n)} - \sqrt{E_0}) \}_j$ 
of the finite box Dirichlet Hamiltonian 
$H_n := H |_{[0, n]}$
around the reference energy 
$E_0 > 0$,  
whose behavior as 
$n \to \infty$
is given by 
\cite{KN1, N2, KN2}
\beq
\xi_{n, E_0} 
\stackrel{d}{\to}
\left\{
\begin{array}{lc}
Clock(\theta(E_0)) & (\alpha>1/2) \\
Sine(\beta(E_0)) & (\alpha=1/2) \\
Poisson(d \lambda / \pi) & (\alpha<1/2)
\end{array}
\right.
\eeq
where 
$Clock (\theta) := 
\sum_{ n \in {\bf Z}} \delta_{n \pi + \theta}$, 
is the clock process for a random variable 
$\theta$
on 
$[0, \pi)$,
$Sine(\beta)$ 
is the Sine$_{\beta}$-process which is the bulk scaling limit of the Gaussian beta emsemble \cite{VV}, and 
Poisson ($\mu$) is a Poisson process on 
${\bf R}$
with intensity measure 
$\mu$.
$\theta (E_0)$
has a form of the projection onto the torus 
$[0, \pi)$ 
of a time-change of a Brownian motion \cite{KN1}. 
$\beta (E_0) = \tau(E_0)^{-1}$ 
is equal to the reciprocal number of 
$\tau(E_0)$ 
explicit form of which is given in (\ref{tau}).
$\tau(E_0)$ 
is the ``Lyapunov exponent"
such that the solution to the Schr\"odinger equation 
$H \varphi = E \varphi$ 
has the power-law decay : 
$\varphi(x) \simeq |x|^{- \tau (E)}$, 
$|x| \to \infty$.
Since
$\lim_{E_0 \downarrow 0} \beta(E_0) = 0$ 
and 
$\lim_{E_0 \uparrow \infty} \beta (E_0) = \infty$, 
repulsion of eigenvalues near 
$E_0$ 
is small (resp. large) 
if 
$E_0$ 
is small (resp. large), which is consistent with the following fact \cite{AD, N3} : 
\begin{equation}
Sine (\beta) 
\stackrel{d}{\to} 
\left\{
\begin{array}{ll}
Poisson (d \lambda / \pi) & (\beta \downarrow 0) \\
Clock(unif[0, \pi)) & (\beta \uparrow \infty)
\end{array}
\right.
\label{Sinebetalimit}
\end{equation}
In this paper, 
we consider the scaling limit of the measure corresponding to the eigenfunction of 
$H_L$
along  the formulation studied by Rifkind-Virag 
\cite{RV}.
%
%To formulate the problem, we need some notations. 
%
Let 
$\{ E_j (n) \}_{j\ge 1}$ 
be the positive eigenvalues of 
$H_n$, 
and let
$\{ \psi^{(n)}_{E_j(n)} \}$ 
be the corresponding eigenfunctions. 
We consider the associated random probability measure
$\mu^{(n)}_{E_j(n)}$ 
on 
$(0,1)$.
%as : 
%
\begin{equation}
\mu^{(n)}_{E_j(n)} (dt)
:=
Cn
\left(
\left| 
\psi^{(n)}_{E_j(n)}(nt) 
\right|^2 
+
\frac {1}{E_j(n)}
\left| 
\frac {d}{dt}
\psi^{(n)}_{E_j(n)} (nt)
\right|^2
\right) 
dt
\label{randommeasure}
\end{equation}
where 
$C$
is the normalizing constant.
In 
(\ref{randommeasure}), 
we consider the derivative of the 
$\psi^{(n)}_{E_j(n)}$ 
as well as 
$\psi^{(n)}_{E_j(n)}$
so that the analysis is reduced to that of the radial part of the Pr\"ufer variable to be introduced in (\ref{Prufer}).
However, 
since 
$\psi$
and 
$\psi'$
have same global behavior, 
and since 
we are interested in the global shape of 
$\psi$, 
we believe that 
(\ref{randommeasure})
is a reasonable definition to study the shape of eigenvectors. 
Let 
$J := [a,b] (\subset (0, \infty))$ 
be an interval, 
${\cal E}^{(n)}_J := 
\{ E_j (n) \}_j \cap J$ 
be the set of eigenvalues of 
$H_n$ 
in 
$J$, 
and 
$E_J^{(n)}$ 
be the random variable uniformly distributed on 
${\cal E}^{(n)}_J$.
Our aim is 
to consider the large 
$n$ 
limit of the joint distribution of the eigenvalue-eigenvector pairs : 
\beq
{\bf Q} : 
\left(
E_J^{(n)}, \mu_{E_J^{(n)}}^{(n)}
\right)
\;
\stackrel{d}{\to} 
\;
?
\eeq
For 
d-dimensional discrete random Schr\"odinger operator, 
if 
$J$ 
is in the localized region, we have
\cite{N1}
\begin{equation}
\left(
E_J^{(n)}, \mu_{E_J^{(n)}}^{(n)}
\right)
\stackrel{d}{\to}
\left(
E_J, \delta_{unif[0, 1]^d}
\right)
\label{previousresult}
\end{equation}
where 
$E_J$ 
is the random variable whose distribution is equal to  
$N(J)^{-1} 1_J (E) d N(E)$, 
where 
$dN$
is the density of states measure.
Rifkind-Virag \cite{RV} 
studied the 1-d discrete Schr\"odinger operator with critical
($\alpha = 1/2$) 
decaying coupling constant, and obtained that  
the limit of 
$\mu^{(n)}_{E_J^{(n)}}$
is given by an exponential Brownian motion with negative drift which corresponds to the exponential decay of the eigenfunctions. 

~
To state our result, we need notations further.
Let 
$N(E) := \pi^{-1} \sqrt{E}$
be the integrated density of states of 
$H$, 
$N(J) := N(b) - N(a)$, 
and let 
\begin{equation}
\tau(E)
:= 
\frac {1}{8E}
\int_M | \nabla (L + 2i \sqrt{E})^{-1} F|^2 dx
\label{tau}
\end{equation}
where 
$L$ 
is the generator of 
$(X_t)$. 
Moreover, let 
$E_J$
be the random variable whose distribution is equal to 
$N(J)^{-1} 1_J (E) d N(E)$, 
let 
$U$ 
be the uniform distribution on 
$(0,1)$, 
and let 
${\cal Z}$ 
be the 2-sided Brownian motion, where 
$E_J$, $U$, and 
${\cal Z}$
are independent.
\begin{theorem}
\beq
&&
\left(
E_J^{(n)}, \mu^{(n)}_{E_J^{(n)}}
\right)
\\
&&
\stackrel{d}{\to}
\left\{
\begin{array}{ll}
\left(
E_J, 1_{[0,1]}(t) dt 
\right) & (\alpha > 1/2) \\
\left(
E_J, 
\frac {
\exp
\Bigl(
2 {\cal Z}_{\tau(E_J)\log \frac tU} - 2 \tau(E_J) 
\left|
\log \frac tU 
\right|
\Bigr)
dt
}
{
\int_0^1
\exp
\Bigl(
2 {\cal Z}_{\tau(E_J)\log \frac sU} - 2 \tau(E_J) 
\left|
\log \frac sU 
\right|
\Bigr)
ds
}
\right)
& (\alpha=1/2) \\
\left(
E_J, \delta_{U}(dt)
\right)
& (\alpha < 1/2) 
\end{array}
\right.
\eeq
\end{theorem}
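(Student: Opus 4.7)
The natural route is the Pr\"ufer representation $\psi_E = r_E \sin\theta_E,\ \psi_E'/\sqrt{E} = r_E \cos\theta_E$, under which the integrand in (\ref{randommeasure}) equals $r_E(nt)^2$ and the measure becomes $\mu^{(n)}_E(dt) = r_E(nt)^2\,dt / \int_0^1 r_E(ns)^2\,ds$ on $(0,1)$. The amplitude satisfies
$$d \log r_E^2(t) = \frac{a(t) F(X_t)}{\sqrt{E}} \sin(2\theta_E(t))\,dt,$$
coupled with $d\theta_E = \sqrt{E}\,dt + O(a)$, and the Dirichlet eigenvalues $E_j(n)$ are selected by $\theta_{E_j}(n) = j\pi$. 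The three regimes separate according to the integrability of $a(t)$.

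For the super-critical regime $\alpha > 1/2$, $\int a(t)^2\,dt < \infty$ so $\log r_E^2(t)$ is a martingale with summable quadratic variation; hence $r_E(t) \to r_E(\infty) \in (0,\infty)$ almost surely, uniformly in $E$ on compact subsets of $(0,\infty)$. Therefore $r_E(nt)^2 / r_E(n)^2 \to 1$ uniformly in $t \in [\delta,1]$ and $\mu^{(n)}_E \to 1_{[0,1]}(t)\,dt$. For the sub-critical regime $\alpha < 1/2$, the sub-exponential localization established in \cite{KN2} (using the assumption on $a'$) together with the Poisson eigenvalue statistics of \cite{KN2} gives $\mu^{(n)}_{E_J^{(n)}} \to \delta_U$ with $U \sim \mathrm{Unif}(0,1)$ independent of $E_J$, in close analogy with the discrete case (\ref{previousresult}) from \cite{N1}. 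In both regimes $E_J^{(n)} \to E_J$ follows from Weyl's law for $dN$.

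The critical regime $\alpha = 1/2$ is the heart of the proof. The key input is the diffusion limit from \cite{KN1}: in log-time $u = \log t$, $\log r_E^2(e^u)$ is approximated by a Brownian motion with drift whose quadratic variation rate and drift are governed by $\tau(E)$, with the analogous statement for solutions emanating from $t = n$. A Dirichlet eigenfunction of $H_n$ arises as the matching of two such generic solutions from the two endpoints; combined with the eigenvalue condition $\theta_{E_j}(n) = j\pi$ and the asymptotic uniformity of $\theta_E(n) \pmod{\pi}$, this forces the matching point (the peak of $r_E^2$) to be $t^* = nU$ with $U \sim \mathrm{Unif}(0,1)$. Setting $v = \log(t/U)$ and centering at the peak, the two-sided bridge structure gives $\log r_E^2(nt) - \log r_E^2(nU) \approx 2 \mathcal{Z}_{\tau(E_J) v} - 2\tau(E_J)|v|$ with $\mathcal{Z}$ a two-sided Brownian motion independent of $(E_J, U)$, which matches the form claimed in the theorem. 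The independence of $(E_J, U, \mathcal{Z})$ emerges from the separation of scales between the macroscopic sampling of $E_J$, the slow phase-driven choice of $U$, and the local Brownian fluctuations $\mathcal{Z}$.

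The main obstacle is the critical case, specifically the rigorous identification of the peak-centered bridge structure under eigenvalue sampling. Two non-trivial ingredients are needed: (i) the uniformity of $U$ and its independence from $(\mathcal{Z}, E_J)$, which requires a careful decoupling of the Pr\"ufer phase and radius along the random eigenvalue label $j$; and (ii) tightness of $\mu^{(n)}_{E_J^{(n)}}$ as a random element of $\mathcal{P}(0,1)$, in particular controlling the contributions from the tails $t \to 0, 1$ where the Brownian approximation of the amplitude degenerates. A functional central limit theorem for $(\theta_E, \log r_E)$ uniform in $E \in J$ is the anticipated main technical tool.
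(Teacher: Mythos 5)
Your local picture (Pr\"ufer variables, the three regimes, drifted Brownian motion in log-time for $\alpha=1/2$) is the right starting point, but the proposal has a genuine gap at the step that the theorem is actually about: the uniform sampling of the eigenvalue over the macroscopic window $J$. The paper's proof is organized as (i) a local limit theorem (Theorem 1.2): the point process $\Xi^{(n)}_{E_0}$ of pairs (rescaled eigenvalue, associated measure) converges, \emph{and} its intensity measure is identified explicitly; followed by (ii) a Fubini/averaging argument over the reference energy: one introduces $G_n(E)=\sum_{E_j(n)\in J} g_1\bigl(n(\sqrt{E_j(n)}-\sqrt{E})+\theta\bigr)g_2\bigl(E_j(n),\mu^{(n)}_{E_j(n)}\bigr)$, computes $\int_J \frac{dN(E)}{N(J)}{\bf E}[G_n(E)]$ in two ways, and needs uniform integrability of $G_n(E)$ with respect to $dN\times{\bf P}$ (proved in the Appendix via estimates on $\Theta^{(n)}_t(c)$). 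It is precisely this mechanism that converts ``intensity measure of the local process'' into ``law of $\mu^{(n)}_{E_J^{(n)}}$ under uniform choice of the eigenvalue.'' Your proposal replaces this by the assertions that ``$E_J^{(n)}\to E_J$ follows from Weyl's law'' and that independence ``emerges from separation of scales,'' which is exactly the part that needs proof; without the averaging step and the uniform-integrability input, the statement about $E_J^{(n)}$ uniformly distributed on ${\cal E}^{(n)}_J$ does not follow from any fixed-$E_0$ analysis.

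Two further points where the sketch would not go through as written. First, in the critical case the $U$-centered tent shape $2{\cal Z}_{\tau\log(t/U)}-2\tau|\log(t/U)|$ is \emph{not} obtained in the paper from a ``matching of two endpoint solutions with peak at $nU$''; for a fixed atom of the Sine$_\beta$ limit the measure is $e^{2\widetilde r_t(\lambda)}dt$ (normalized) with $\widetilde r$ solving the SDE (\ref{SDE}), and the uniform peak location $U$ appears only at the level of the intensity measure, through the size-biasing by $\bigl|\partial\Theta_1(\lambda)/\partial\lambda\bigr|=\phi_1$ in (\ref{phi}) and the Girsanov tilt of Lemma \ref{Girsanov}. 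Your ``bridge'' heuristic identifies neither this size-biasing nor the independence structure, which you yourself list as an obstacle rather than resolve. You also do not address the renormalization issue the paper flags: $r_{nt}(\kappa)$ diverges like $\log n$ for $\alpha=1/2$ and must be renormalized by subtracting $\langle Fg_\kappa\rangle\int_0^n a(s)^2ds$, with a separate estimate for the region $t\lesssim\sqrt{\log n}/n$, before any diffusion limit in $t\in(0,1)$ can be taken. Second, in the sub-critical case you cite localization plus Poisson statistics ``in analogy with the discrete case,'' but the discrete arguments use stationarity and Minami-type estimates unavailable here; the paper has to prove the joint Poisson convergence of (rescaled eigenvalue, localization center) by a new subsystem argument (Proposition \ref{locctr} with the operators $H_C$, $H_D$), and moreover needs the \emph{equality} of intensities (\ref{Poissonintensity}) later for the uniform-integrability step. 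These are missing ideas, not routine details.
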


When 
$\alpha < 1/2$, 
this result is the same as 
(\ref{previousresult}) and reflects the fact that, in the global scaling limit,  eigenfunctions are localized around the localization centers being uniformly distributed, which is typical in Anderson localization
\cite{N1}. 
For 
$\alpha > 1/2$,  
this result corresponds to the fact that  
the generalized eigenfunctions are spread over the entire space and is consistent with the extended nature of the system.  
For 
$\alpha = 1/2$, 
since
\beq
\exp
\left[
Z_{\tau(E) \log \frac st} - 
\tau(E) \left| \log \frac st \right|
\right]
=
\exp
\left[
Z_{\tau(E) \log \frac st}
\right]
%\left\{
\begin{cases}
\left( \frac ts \right)^{\tau(E)} & (t<s) \\
\left( \frac st \right)^{\tau(E)} & (s<t)
\end{cases}
%\right. 
\eeq
this result implies that the center 
$U$ 
of the generalized eigenfunction 
$\psi$
is uniformly distributed and 
$\psi$
has the power law decay around 
$U$ 
with Brownian fluctuation. 
Since 
$\lim_{E \downarrow 0} \tau(E) = \infty$ 
and 
$\lim_{E \uparrow \infty} \tau (E) = 0$, 
$\psi$
is localized (resp. delocalized) as  
$E \downarrow 0$
(resp. $E \uparrow \infty$) 
which is consistent with the previous discussion(\ref{Sinebetalimit}).\\

{\bf Remark}\\
{\it 
(1)
The results
in this paper are announced in  
\cite {N4} 
without proof. \\
(2)
In Appendix, 
we discuss the continuum 
1-dimensional operator with decaying coupling constant, and have similar results with Theorem 1.1 except that, 
`` $\log \dfrac tU$" 
for 
$\alpha = 1/2$ 
in Theorem 1.1
is replaced by 
$ t - U $. 
The conclusion for 
$\alpha = 1/2$ 
for this model is essentially the same as that in 
Rifkind-Virag \cite{RV}.}
\\

~For the outline of proof, 
we mostly follow the strategy in \cite{RV}, that is,  
(i) first consider the local version of the problem and then (ii) average over the reference energy, which we show more explicitly below. \\ 
{\bf Step 1 : Local version }\\
We first 
consider the local version 
$\Xi^{(n)}$
of our problem : 
take 
$E_0 > 0$ 
as the reference energy and let 
\beq
\Xi_{E_0}^{(n)}
:=
\sum_j
\delta_{
\Bigl(
n
\bigl(
\sqrt{E_j(n)} - \sqrt{E_0}
\bigr)
+ \theta, 
\,
\mu_{E_j(n)}^{(n)}
\Bigr)
} 
\eeq
which is a point process on 
${\bf R} \times {\cal P}(0,1)$
where 
${\cal P}(0,1)$
is the space of probability measures on 
$(0,1)$ 
with the vague topology.
$\theta$
is a random variable with 
$\theta \sim unif [0, \pi)$
for 
$\alpha > 1/2$ 
which is independent from 
$(X_t)$, 
and 
$\theta = 0$
otherwise. 
The motivation 
to consider 
$\Xi_{E_0}^{(n)}$ 
is to study the behavior of eigenvalues lying in the 
${\cal O}(n^{-1})$-
neighborhood of 
$E_0$
and the measures associated with them. 
Random variable
$\theta$ 
(for 
$\alpha > 1/2$)
has the role of making the 
$n \to \infty$
limit being independent of the choice of subsequence. 
We 
then have the following result which is of independent interest : 
%It then follows that 
%%%%%%
\begin{theorem}
$\Xi^{(n)} \stackrel{d}{\to} \Xi$, 
where
\beq
\Xi
&=&
\left\{
\begin{array}{ll}
\sum_{j \in {\bf Z}}
\delta_{j \pi + \theta}
\otimes
\delta_{ 1_{[0,1]}(t) dt }
& (\alpha > 1/2) \\
\sum_{ \lambda : Sine_{\beta} }
\delta_{
\lambda
}
\otimes
\delta
\Bigl(
\frac { 
\exp ( 2 \widetilde{r}_t(\lambda) ) dt
}
{
\int_0^1
\exp ( 2 \widetilde{r}_s(\lambda) ) ds
}
\Bigr)
& (\alpha = 1/2) \\
\sum_{j \in {\bf Z}}
\delta_{P_j}
\otimes
\delta_{\widetilde{P}_j}, 
\quad
&
(\alpha < 1/2) 
\end{array}
\right.
\eeq
where
$\theta \sim unif [0, \pi)$,  
$\widetilde{r}_t (\lambda)$
%(for $\alpha = 1/2$)
is characterized by the following equation : 
\begin{equation}
d 
\widetilde{r}_t (\lambda)
=
\frac {
\tau(E_0)
}{t}
dt
+
\sqrt{
\frac {
\tau(E_0)
}{t}
}
d B_t^{\lambda}, 
\quad
t > 0, 
\quad
\lambda \in {\bf R}, 
\label{SDE}
\end{equation}
and 
$\{ B_t^{\lambda} \}_{\lambda}$
is a family of Brownian motion.
Moreover 
%for $\alpha < 1/2$, 
$\{ P_j \} \sim Poisson (d \lambda /\pi)$
and  
$\{ \widetilde{P}_j \} \sim Poisson (1_{[0,1]}(t) dt)$.
%and $Poisson (\mu)$is the Poisson process with intensity measure $\mu$. 
%
The intensity measure of 
$\Xi$ 
is given by 
\beq
&&
{\bf E} 
\left[
\int
G(\lambda, \mu) 
d \Xi(\lambda, \mu)
\right]
\\
&=&
\frac {1}{\pi}
\left\{
\begin{array}{ll}
\int d \lambda \,
{\bf E}
\left[
G
\left(
\lambda, 
1_{[0,1]} (t) dt 
\right)
\right]
& 
(\alpha > 1/2) \\
\int d \lambda \,
{\bf E}
\left[
G
\left(
\lambda, 
\frac {
\exp
\Bigl(
2 {\cal Z}_{
\tau(E_0) \log \frac tU
}
-
2 \tau(E_0)
\log 
\left| \frac tU \right|
\Bigr)
dt
}
{
\int_0^1
\exp
\Bigl(
2 {\cal Z}_{
\tau(E_0) \log \frac sU
}
-
2 \tau(E_0)
\log 
\left| \frac sU \right|
\Bigr)
ds
}
\right)
\right]
&
(\alpha = 1/2)
\\
\int d \lambda \,
{\bf E}
\left[
G
\left(
\lambda, 
\delta_{ U }
\right)
\right]
& 
(\alpha < 1/2) 
\end{array}
\right.
\eeq
for 
$G \in C_b({\bf R} \times {\cal P}(0,1))$, 
where 
$U := unif [0,1]$.
\end{theorem}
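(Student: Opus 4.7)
The plan is to work in Prüfer coordinates $(r_t(E), \theta_t(E))$ for the eigenvalue equation $H\psi = E\psi$, defined via $\psi(t) = E^{-1/2} r_t \sin\theta_t$ and $\psi'(t) = r_t \cos\theta_t$, so that $|\psi|^2 + E^{-1}|\psi'|^2 = r_t^2/E$. After normalisation this reduces the eigenvector measure to
\[
\mu^{(n)}_{E_j(n)}(dt) = \frac{r_{nt}(E_j(n))^2\,dt}{\int_0^1 r_{ns}(E_j(n))^2\,ds}.
\]
The entire content of Theorem 1.2 is then encoded in the joint convergence of the phase field $\theta_{nt}(E_0+\lambda/n)$ and the centred log-radial field $R_{nt}(\lambda) := \log r_{nt}(E_0+\lambda/n) - \log r_n(E_0+\lambda/n)$ as random fields in $(t,\lambda)\in [0,1]\times {\bf R}$. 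Convergence of the phase alone, which produces the Clock, Sine$_\beta$, and Poisson limits at the level of eigenvalues, is already established in \cite{KN1, N2, KN2}, so only the radial part and the coupling between the two remain.

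For $\alpha > 1/2$, the radial SDE has vanishing quadratic variation on $[0,n]$ since $\int_0^n a(s)^2\,ds = O(n^{1-2\alpha}) \to 0$; hence $R_{nt}(\lambda)\to 0$ uniformly on compacts in probability, forcing $\mu^{(n)}_{E_j(n)}\to 1_{[0,1]}(t)\,dt$ jointly with the clock convergence of eigenvalues, and the independent uniform phase $\theta$ absorbs any subsequential ambiguity. For $\alpha < 1/2$, the eigenfunctions are exponentially localised by \cite{KU, KN2}, and the asymptotic positions of the maxima of $r_{nt}^2$ — playing the role of localisation centres — become uniformly distributed on $[0,1]$ and independent of the Poisson eigenvalues by the abstract localisation argument of \cite{N1}, giving the stated product structure $\delta_{P_j}\otimes \delta_{\widetilde P_j}$.

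The critical case $\alpha = 1/2$ is the main work. Here the radial equation at energy $E_0+\lambda/n$ takes the form $d\log r_t = a(t) G(X_t,\theta_t(E))\,dt$ for an explicit $G$, and averaging against the stationary measure of $X_t$ via the resolvent identity for $(L+2i\sqrt{E_0})^{-1}F$ produces the constant $\tau(E_0)$ in (\ref{tau}). Coupling the SDEs across the spectral parameters $E_0+\lambda/n$ as in \cite{KN2}, the rescaled log-radial field converges to the family $\widetilde r_t(\lambda)$ satisfying (\ref{SDE}); under the time change $s=\tau(E_0)\log t$, the drift-variance pair $\tau/t\,dt+\sqrt{\tau/t}\,dB^\lambda_t$ becomes a standard Brownian motion with unit linear drift, which is precisely the object ${\cal Z}_{\tau(E_0)\log(t/U)}$ appearing in Theorem 1.1 once $U$ is identified with the localisation-like centre encoded in the initial condition of the time-changed process. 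The principal obstacle is establishing joint tightness of the full radial random field $(R_{nt}(\lambda))_{t,\lambda}$ together with the phase field, so that one can pass to the limit under the sum over eigenvalues labelled by the Sine$_\beta$ process; once this is secured, the intensity formula follows from Campbell's theorem, the factor $1/\pi$ coming from the density of states $dN(E)=d\sqrt{E}/\pi$, and the Palm conditional law of $\widetilde r$ at a typical eigenvalue matches the exponential Brownian functional in Theorem 1.1.
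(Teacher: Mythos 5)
Your overall frame (Pr\"ufer coordinates, convergence of the centred log--radial field jointly with the phase field, phase convergence quoted from \cite{KN1, N2, KN2}) is the same as the paper's, but two of the places where you delegate the work are exactly the places where the paper has to do something new, so the proposal as written has genuine gaps. First, for $\alpha<1/2$ you invoke ``the abstract localisation argument of \cite{N1}'' to get that localization centres are uniform on $[0,1]$ and jointly Poisson with the eigenvalues. That argument (and those of \cite{KiN, GK}) relies on stationarity of the potential and on a Minami-type estimate, neither of which is available for this continuum model with decaying potential; the paper flags this explicitly as one of its two main technical problems. What is actually needed is the content of Proposition \ref{locctr}: bracketing $\xi_n(I\times B)$ between the analogous quantities for the restricted operators $H_C$, $H_D$ on intervals shrunk/enlarged by $n^{\delta}$, using sub-exponential localization to control the error $\mathcal{O}(\exp[-(const.)\,n^{\delta\gamma}])$ in the energy windows, proving Poisson statistics for these subsystems by rerunning the $\Theta$-SDE argument of \cite{KN2} with shifted starting time, and then concluding via Kallenberg's theorem from convergence of avoidance probabilities and intensities. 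Citing \cite{N1} does not supply this, so your sub-critical case is not proved.

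Second, in the critical case you correctly identify the limiting SDE (\ref{SDE}) and the need for joint tightness, but you assert the intensity formula ``by Campbell's theorem and the Palm conditional law,'' which is precisely where the real computation lies: passing from the sum over $\lambda\in Sine_\beta$ to $\frac{1}{\pi}\int d\lambda$ introduces the Jacobian $\bigl|\partial\Theta_1(\lambda)/\partial\lambda\bigr|=\phi_1(\lambda)$, which by (\ref{phi}) is an integral of $e^{\widetilde r_v-\widetilde r_{v(1)}}$; the Girsanov step (Lemma \ref{Girsanov}) then converts this size-biasing into the two-sided tent drift $v(U)-|v(\cdot)-v(U)|$, and the factor $t(v)/\tau(E_0)$ in (\ref{phi}) is what makes the peak location $U$ uniform on $(0,1)$. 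Your phrase ``$U$ is identified with the localisation-like centre encoded in the initial condition'' does not produce this mechanism, and without it neither the uniform $U$ nor the two-sided structure of the limit is justified. Two smaller points: your super-critical justification ``$\int_0^n a(s)^2\,ds=O(n^{1-2\alpha})\to 0$'' is false (this integral converges to a positive constant; what vanishes is $\int_{n\epsilon}^{n}a(s)^2\,ds$, which is what forces the centred field to flatten, as in Lemma \ref{locunif}); and in the critical case your centring at $t=1$ does implicitly remove the divergent constant, but you never control the contribution of the normalizing integral near $t=0$, which the paper handles by the $\sqrt{\log n}/n$ splitting together with the sign of the renormalization constant.
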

%%%%%%%%%%%%%%%%
%
We note that
eq.(\ref{SDE})
determines
$\widetilde{r}_t(\lambda)$
up to constant and hence by normalization it  
determines 
$
%\frac {
\exp [ 2\widetilde{r}_t(\lambda) ] dt 
%}{
/
\int_0^1
\exp [ 2\widetilde{r}_s(\lambda) ] ds
%}
$
uniquely.
We also note that 
the intensity measure of 
$\Xi_{E}$
gives the limit of the measure part 
$\mu_{E}^{(n)}$
in Theorem 1.1.
The main technical problems 
we have for the proof are : 
(1)
For the critical case,
the radial component of the generalized eigenfunction of 
$H$
are divergent so that the argument in 
\cite{RV},
which works for decaying coupling constant model (DC model, in short), 
is not directly applicable.
Thus 
we need to renormalize it by cancelling out with the term coming from the normalization part. 
(2)
For the sub-critical case, 
we obtained that the joint limit of the pair of rescaling eigenvalues and corresponding localization centers converge to a Poisson process on 
${\bf R} \times [0,1]$, 
which has been proved for a class of random Schr\"odinger operators \cite{KiN, N1, GK}. 
However, 
in \cite{KiN, N1, GK} 
one used the stationarity of the random potential and
Minami's estimate which is known to hold mostly for discrete models only. 
Here 
we have a general argument such that if (i) the reference energy lies in the localized regime and if (ii) the point process of rescaled eigenvalues of any subsystems, of which the volume is comparable with that of the original one, converge to a Poisson process on 
${\bf R}$, 
then the joint limit of eigenvalues and localization centers also converge to a Poisson process on 
${\bf R} \times [0,1]$.\\
{\bf Step 2 : Average over the reference energy }\\
Next, 
as is done in Rifkind-Virag \cite {RV}, 
we take the ``average" the result of Theorem 1.2 over the reference energy
$E_0$ 
w.r.t. the density of states measure 
$dN$, 
leading to the conclusion of Theorem 1.1. 
This is 
a model-independent, general argument and if 
(i)
one has the limit of the local version 
$\Xi^{(n)}_{E_0}$
and its intensity measure, 
and if 
(ii) 
$G_n (E)$
(defined in Section 3)
is uniformly integrable w.r.t. 
$dN \times {\bf P}$, 
then the distribution of the limit of the measure part 
$\mu^{(n)}_E$
is given by the intensity measure.\\
~The rest 
of this paper is organized as follows. 
In sections 2, 3, 
we prove theorems 1.2, 1.1 respectively. 
In Appendix, 
we state the results for DC model and prove uniform integrability mentioned above. 
We 
sometimes use momentum variable 
$\kappa = \sqrt{E}$
instead of energy variable
$E$.
%
%
%%%%%%%%%%%%%%%%%%%%%%%%%%%%%%%%%%%%%%%%%%%%%%%%%%%%%%%%%%%%
\section{Local version}
In this section 
we prove Theorem 1.2 separately for 
$\alpha > 1/2$, 
$\alpha = 1/2$, 
and 
$\alpha < 1/2$.
%
%%%%%%
\setcounter{subsection}{-1}
\subsection{Preliminary}
We adopt 
the following version of Pr\"ufer coordinate for the solution 
$\psi_E$
to the Schr\"odinger equation 
$H \psi_{\kappa^2}(t) = \kappa^2 \psi_{\kappa^2}(t)$, 
$\kappa > 0$ :  
\begin{equation}
\left(
\begin{array}{c}
\psi_{\kappa^2}(t) \\
\psi'_{\kappa^2}(t)/\kappa
\end{array}
\right)
:=
R_t (\kappa)
\left(
\begin{array}{c}
\sin \theta_t(\kappa) \\
\cos \theta_t(\kappa)
\end{array}
\right). 
\label{Prufer}
\end{equation}
Introducing 
$\widetilde{\theta}_t (\kappa)$, 
$r_t (\kappa)$ 
by 
\beq
\theta_t(\kappa)
=:
\kappa t + \widetilde{\theta}_t (\kappa), 
\quad
R_t (\kappa)
=:
\exp
\left[
r_t(\kappa)
\right]
\eeq
we have 
\cite{KU}
\begin{eqnarray}
r_t (\kappa)
&=&
\frac {1}{2 \kappa}
Im
\,
\int_0^t 
a(s) e^{2 i \theta_s (\kappa) }
F(X_s) ds, 
\label{Prufer1}
\\
\widetilde{\theta}_t (\kappa)
& := &
\frac {1}{2 \kappa}
Re \,
\int_0^t 
\left(
e^{2i \theta_s(\kappa)} - 1 
\right)
a(s)  F(X_s) ds.
\label{Prufer2}
\end{eqnarray}
%
%%%%%%%%%%%%%%%%%%%
\subsection{Super-critical case}
{\it Proof of Theorem 1.2 for super-critical case}\\
In general, 
for a 
sequence 
$\{ \xi_n \}$
of the point processes 
on the metric space 
$X$ 
and a point process 
$\xi$, 
$\xi_n \stackrel{d}{\to} \xi$
is equivalent to 
$\int h d \xi_n 
\stackrel{d}{\to}
\int h d \xi$
for any 
$h \in C_c (X)$
(the space 
${\cal P}(X)$
of probability measures on 
$X$
is endowed with the vague topology).
It thus suffices to show 
\beq
&&
\int h d \Xi_{E_0}^{(n)}
\stackrel{d}{\to}
\int h d \Xi_{E_0}, 
\quad
\Xi_{E_0}
:=
\sum_{j \in {\bf Z}}
\delta_{ j\pi + \theta} 
\otimes
\delta_{ 1_{[0,1]}(t)dt }, 
\quad
\theta \sim unif [0, \pi) 
\eeq
for any 
$h \in C_c({\bf R} \times {\cal P}[0, 1])$.
Also, 
it is sufficient to assume that  
$h (\lambda, \mu)
=
h_1 (\lambda) \cdot h_2 (\mu)$ 
with 
$h_1 \in C_c ({\bf R})$, 
$h_2 \in C({\cal P}(0,1))$.
We 
first work under a subsequence 
$\{ n_k \}_k$
which satisfies the following condition : 
$\lim_{k \to \infty} n_k = \infty$
and there exist
$\{ m_k \}_k \subset {\bf N}$
and 
$\beta \in [0, \pi)$ 
such that 
\begin{equation}
n_k \sqrt{E_0} = m_k \pi + \beta + o(1), 
%\quad
%\exists m_k \in {\bf N}, 
\quad
k \to \infty. 
\label{subsequence}
\end{equation}
Here 
we make use of the facts that, for a.s.,  
$\widetilde{\theta}_t(\kappa)
\stackrel{t \to \infty}{\to}
\widetilde{\theta}_{\infty}(\kappa)$
for locally uniformly w.r.t. 
$\kappa$
\cite{KU},
and 
$E_j (n) \to E_0$
for all 
$j$'s 
such that 
$n \left(
\sqrt{E_j(n)} - \sqrt{E_0}
\right) + \theta \in \mbox{supp } h_1$
(\cite{KN1}, Lemma 4.1).
By Sturm's 
oscillation theory, we have 
\beq
j \pi 
= \theta_{n_k} 
\left(
\sqrt{E_j(n_k)}
\right)
&=&
\sqrt{E_j(n_k)} n_k + 
\widetilde{\theta}_{n_k} 
\left(
\sqrt{E_j(n_k)}
\right)
\\
&=& 
\sqrt{E_j(n_k)} n_k
+ 
\widetilde{\theta}_{\infty}
\left(
\sqrt{ E_0 }
\right)
+
o(1), 
\quad
k \to \infty
\eeq
and thus together with 
(\ref{subsequence}), 
\beq
n_k \left( \sqrt{E_j(n_k)} - \sqrt{E_0} \right) + \theta
& \stackrel{a.s.}{=} &
(j - m_k) \pi - \widetilde{\theta}_{\infty} 
\left(
\sqrt{ E_0 }
\right)
 - \beta + \theta + o(1).
\eeq
which yields 
\begin{equation}
\sum_j h_1 
\left(
n_k \left( \sqrt{E_j(n_k)} - \sqrt{E_0} + \theta \right)
\right)
\stackrel{a.s.}{\to}
\sum_j h_1 
\left(
(j - m_k) \pi - \widetilde{\theta}_{\infty} 
\left(
\sqrt{ E_0 }
\right)
- \beta+ \theta
\right).
%\quad\cdots (*)
\label{h1}
\end{equation}
To study 
the measure part, we introduce measures 
$\nu^{(n)}_E, \mu^{(n)}_E$
on 
$(0,1)$ : 
\beq
d\nu_{E}^{(n)}=
n  \cdot R_{nt}(\sqrt{E})^2 dt, 
%\in {\cal M}(0,1), 
\quad
\mu_{E}^{(n)}
:=
\frac {
\nu_E^{(n)}
}
{
\nu_E^{(n)} (0,1)
}
\eeq
so that 
$\mu_E^{(n)}$
is equal to the measure part 
$\mu_{E_j(n)}^{(n)}$
of 
$\Xi_{E_0}^{(n)}$
if 
$E$
is equal to an eigenvalue 
$E_j(n)$ 
of 
$H_n$.
Then by Lemma \ref{locunif}
given below, 
for a.s. and for locally uniformly 
w.r.t. 
$E$, 
we have 
\begin{equation}
h_2 (\mu_{E}^{(n)}) 
\stackrel{a.s.}{\to} 
h_2(1_{[0,1]}(t)dt). 
%\quad h_2 \in C({\cal P}[0,1]).
\quad
%a.s.
%\quad\cdots (**)
\label{h2}
\end{equation}
Therefore 
by $(\ref{h1}), (\ref{h2})$, 
\beq
\sum_j h_1 
\left(
n_k \left( \sqrt{E_j} - \sqrt{E_0} \right)
+ \theta
\right)
h_2 (\mu_{E_j (n_k)}^{(n_k)})
\stackrel{a.s.}{\to}
\sum_{j \in {\bf Z}} 
h_1 
\left(
j\pi - \widetilde{\theta}_{\infty} 
\left( \sqrt{E_0} \right)
- \beta + \theta
\right)
h_2(1_{[0,1]}(t) dt).
\eeq
Here we note that 
$(\beta- \widetilde{\theta}_{\infty} (\kappa) + \theta)_{\pi {\bf Z}} \stackrel{d}{=} \theta$, 
where 
$(x)_{\pi {\bf Z}}
:=
x - 
\max \{ k \in {\bf Z} \, | \, k \le x \}
\in
[0, \pi)$
is the ``fractional part" of 
$x$ 
modulo 
$\pi {\bf Z}$, 
which yields 
\beq
&&
\int h\, d \,\Xi_{E_0}^{(n_k)}
\stackrel{d}{\to}
\int h\, d \,\Xi_{E_0}, 
\quad
h \in C_c ({\bf R} \times {\cal P}(0,1)).
\eeq
Since 
the limit in distribution is independent of the subsequence, this convergence holds for the whole limit so that we have  
$\Xi_{E_0}^{(n)} \stackrel{d}{\to} \Xi_{E_0}$. 
For the intensity measure, 
we note 
$\theta \sim unif [0, \pi)$ 
and compute : 
\beq
{\bf E} 
\left[
\int G(\lambda, \mu)
\,
d \Xi_{E_0}
\right]
=
\sum_{j \in {\bf Z}}
\int_0^{\pi}
\frac {d \theta}{\pi}
G
\left(
j \pi + \theta, 1_{[0,1]}(t) dt
\right)
=
\frac {1}{\pi}
\int d \lambda\,
G
\left(
\lambda, 1_{[0,1]}(t) dt
\right). 
\eeq
\QED
\\
%%Q%%%

%%
%%%%%
\begin{lemma}
\label{locunif}
For a.s., we have
\beq
d\mu_{E}^{(n)}
\stackrel{v}{\to} 1_{[0,1]}(t) dt. 
%\quad a.s.
%\quad\cdots (\sharp)
\eeq
locally uniformly w.r.t. 
$E$.
\end{lemma}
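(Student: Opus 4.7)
The plan is to deduce the lemma from the Kotani--Ushiroya asymptotics for the Pr\"ufer radius. In the super-critical regime $\alpha > 1/2$, \cite{KU} yields $r_t(\kappa) \to r_\infty(\kappa)$ almost surely, locally uniformly in $\kappa$. Consequently $R_t(\kappa) = \exp[r_t(\kappa)]$ converges almost surely to a strictly positive finite limit $R_\infty(\kappa)$, and on any compact $K \subset (0,\infty)$ one has (a.s.)
\[
0 < m_K \le R_t(\kappa) \le M_K < \infty, \qquad t \ge 0, \ \kappa \in \sqrt{K},
\]
with $R_t(\kappa)^2 \to R_\infty(\kappa)^2$ uniformly on $\{t \ge T\} \times \sqrt{K}$ as $T \to \infty$, and $R_\infty$ continuous and bounded below on $\sqrt{K}$.

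To prove vague convergence of $\mu_E^{(n)}$ to Lebesgue measure locally uniformly in $E$, I would test against an arbitrary $\varphi \in C_c(0,1)$ with $\supp(\varphi) \subset [a,b] \subset (0,1)$ and show that
\[
\int_0^1 \varphi(t)\, d\mu_E^{(n)}(t)
\;=\; \frac{\int_0^1 \varphi(t)\, R_{nt}(\sqrt{E})^2\, dt}{\int_0^1 R_{nt}(\sqrt{E})^2\, dt}
\;\xrightarrow{\,n \to \infty\,}\;
\int_0^1 \varphi(t)\, dt
\]
uniformly in $E \in K$. Since $\varphi$ is supported in $[a,b]$ and $nt \ge na \to \infty$ there, the uniform convergence above gives directly that the numerator tends to $R_\infty(\sqrt{E})^2 \int_0^1 \varphi(t)\, dt$ uniformly in $E \in K$.

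The denominator requires one extra step because $R_{nt}^2$ does not converge pointwise for $t$ near $0$. For any $\epsilon > 0$ I would split $\int_0^1 = \int_0^\epsilon + \int_\epsilon^1$: the first piece is controlled by $\epsilon\, M_K^2$ uniformly in $n$ and $E$, and the second converges uniformly to $(1-\epsilon)\, R_\infty(\sqrt{E})^2$ by the same uniform convergence on $\{ns \ge n\epsilon\}$. Sandwiching and letting $\epsilon \downarrow 0$ yields $\int_0^1 R_{nt}(\sqrt{E})^2\, dt \to R_\infty(\sqrt{E})^2$ uniformly in $E \in K$. Since $R_\infty(\sqrt{E}) > 0$ is bounded below on $\sqrt{K}$, dividing numerator by denominator proves the claim.

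The only mildly subtle point is precisely the failure of pointwise convergence of $R_{nt}^2$ near $t = 0$ in the denominator; what rescues the argument is the uniform-in-$t$ boundedness of $R_t(\kappa)$, which is the signature of the super-critical regime and the reason this line of argument does not carry over to $\alpha \le 1/2$. Beyond the input from \cite{KU}, no further martingale or spectral-averaging estimate is required at this stage.
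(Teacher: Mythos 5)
Your argument is correct and follows essentially the same route as the paper: both use the a.s. locally uniform convergence $r_t(\kappa)\to r_\infty(\kappa)$ from \cite{KU} together with the uniform boundedness of $r_t(\kappa)$ (equivalently your $m_K\le R_t\le M_K$), and both handle the normalization integral by splitting off a small piece near $t=0$ where $R_{nt}^2$ need not converge. The only cosmetic difference is that you test against $\varphi\in C_c(0,1)$ while the paper estimates $\nu_E^{(n)}(a,b)/\nu_E^{(n)}(0,1)$ on intervals directly; the substance is identical.
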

%%%%%
%
\begin{proof}
Let 
$\kappa := \sqrt{E}$.
%$\kappa_c := \kappa + \frac cn$
Since 
$r_t(\kappa)
\stackrel{t \to \infty}{\to}
r_{\infty}(\kappa)$
locally uniformly w.r.t.
$\kappa$ 
for almost surely 
\cite{KU}, 
for any 
$\epsilon > 0$ 
there exists 
$T_{\epsilon} > 0$ 
s.t. 
$t > T_{\epsilon}$
implies 
$|
r_t (\kappa)
-
r_{\infty} (\kappa)
| < \epsilon$.
By definition, 
%
%\beq
$
\nu_{E}^{(n)} (a,b)
=
\int_a^b 
e^{2 r_{nu}(\kappa)} n du
=
\int_{na}^{nb} 
e^{ 2 r_s(\kappa) } ds
$ 
%\eeq
%
so that for 
$na > T_{\epsilon}$, 
\begin{equation}
e^{2 r_{\infty}(\kappa) - 2 \epsilon}
n(a-b)
\le
\int_{na}^{nb} 
e^{ 2 r_s (\kappa) } ds
\le
e^{2 r_{\infty}(\kappa) + 2 \epsilon}
n(a-b).
\label{one}
\end{equation}
To estimate  
$
\nu_{E}^{(n)}(0,1)
=
\int_0^1 
e^{2 r_{nu} } n du
=
\int_{0}^{n} 
e^{ 2 r_s } ds
$, 
we note that 
$\sup_{n, t} | r_{nt}(\kappa) | =: M < \infty$, 
and we divide the domain of integral as 
$(0,n) = (0, T_{\epsilon} ] \cup [ T_{\epsilon}, n)$. 
We then have
\begin{equation}
e^{2 r_{\infty}(\kappa) - 2 \epsilon}
(n - T_{\epsilon})
+
T_{\epsilon} e^{- 2M}
\le
\int_{0}^{n} 
e^{ 2 r_s (\kappa)} ds
\le
e^{2 r_{\infty}(\kappa) + 2 \epsilon}
(n - T_{\epsilon})
+
T_{\epsilon} e^{ 2M}.
\label{two}
\end{equation}
By
(\ref{one}), (\ref{two})
\beq
e^{- 4 \epsilon}
(a-b)
\le
\liminf_{n \to \infty}
\frac {
\nu_{E}^{(n)} (a,b)
}
{
\nu_{E}^{(n)} (0,1)
}
\le
\limsup_{n \to \infty}
\frac {
\nu_{E}^{(n)} (a,b)
}
{
\nu_{E}^{(n)} (0,1)
}
\le
e^{ 4 \epsilon}
(a-b).
\eeq
Since 
$\epsilon > 0$
is arbitrary, we have 
\beq
\lim_{n \to \infty}
\frac {
\nu_{E}^{(n)} (a,b)
}
{
\nu_{E}^{(n)} (0,1)
}
=
%e^{ 4 \epsilon}
(a-b), 
\quad
a.s.
\eeq
\QED
\end{proof}
%

%%%%% INTRODUCTION %%%%%%%%%%%%%%%%%%%%%%%%%%%%%%%%%
\subsection{Critical case}
In section 2.2, 
we renormalize the radial component of 
$\psi_{ \kappa^2 }$
in section 2.2.1, 
and 
prove Theorem 1.2 for 
$\alpha = 1/2$ 
in section 2.2.2.
\subsubsection{Renormalize the radial component}
Let 
$r_t(\kappa)$, 
$\widetilde{\theta}_t (\kappa)$
defined in 
(\ref{Prufer1}), (\ref{Prufer2}).
And let 
$\widetilde{r}^{(n)}_t(\kappa)
:=
r_{nt}(\kappa)
-
\langle F g_{\kappa} \rangle
\int_0^{n} a(s)^2 ds$
be the ``renormalized" radial part of 
$\psi_{\kappa^2}(t)$, 
where 
$g_{\kappa} := (L + 2i \kappa)^{-1} F$, 
$t \in (0,1)$.
To study 
the local version of our problem, we work under the following notation :  
$\kappa_{c} := \kappa_0 + \frac {c}{n}$, 
$\kappa_0 := \sqrt{E_0}$, 
$c \in {\bf R}$. 
We then have
%

%%%%%%%%%%%%%%%%%%
%\begin{screen}
\begin{lemma}
\label{Radial}
If 
$\alpha = 1/2$, 
then there exists subsequence 
$\{ n_k \}_{k \ge 1}$
and continuous function-valued process
$\widetilde{r}_t (c)$
such that 
\beq
&&
\widetilde{r}^{(n_k)}_t (\kappa_{c})
\stackrel{d}{\to}
\widetilde{r}_t(c), 
\quad
%t \in (0,1), 
\mbox{ locally uniformly in }
t \in (0,1),
\, 
c \in {\bf R} 
\\
&&
d 
\widetilde{r}_t (c)
=
\frac {
\tau(\kappa_0^2)
}{t}
dt
+
\sqrt{
\frac {
\tau(\kappa_0^2)
}{t}
}
d B_t^{c}, 
\quad
t > 0
\eeq
where 
$\{ B_t^{c} \}$
is a family of Brownian motion.
\end{lemma}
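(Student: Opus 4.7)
The strategy is to apply It\^o's formula to $g_{\kappa_c}(X_s) e^{2i\theta_s(\kappa_c)}$, using the Poisson equation $F = (L + 2i\kappa_c)\, g_{\kappa_c}$ on $M$ to remove $F$ from the integrand (\ref{Prufer1}) of $r_{nt}(\kappa_c)$. Combined with $d\theta_s = \kappa_c\, ds + d\widetilde\theta_s$ this produces
\[
a(s) F(X_s) e^{2i\theta_s}\, ds = a(s)\, d[g_{\kappa_c}(X_s) e^{2i\theta_s}] - a(s) e^{2i\theta_s} \nabla g_{\kappa_c}(X_s)\cdot dX_s - 2i\, a(s) g_{\kappa_c}(X_s) e^{2i\theta_s}\, d\widetilde\theta_s.
\]
The first term, after integration by parts in $a(s)$, yields a boundary contribution of order $a(n) = O(n^{-1/2})$ and an $a'(s) = O(s^{-3/2})$ remainder, both vanishing locally uniformly in $(t, c)$.

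Since $d\widetilde\theta_s = \tfrac{a(s) F(X_s)}{2\kappa_c}(\cos 2\theta_s - 1)\, ds$, the third term above is of order $a(s)^2 \sim s^{-1}$ in the critical regime. Expanding $e^{2i\theta_s}(\cos 2\theta_s - 1) = \tfrac12(e^{4i\theta_s} + 1) - e^{2i\theta_s}$ separates a non-oscillatory constant piece from oscillatory pieces at frequencies $2\kappa_c$ and $4\kappa_c$. By ergodicity of the Brownian motion on $M$, the non-oscillatory piece contributes, after $\tfrac{1}{2\kappa_c}\mathrm{Im}(\cdot)$, the deterministic drift $\tau(\kappa_0^2)\int_0^{nt} a(s)^2\, ds + O(c/n)$, by way of the identities $\mathrm{Re}\langle F g_\kappa\rangle = -\tfrac12 \|\nabla g_\kappa\|^2$ and $\tau(E) = \tfrac{1}{8\kappa^2}\|\nabla g_\kappa\|^2$ (obtained from integration by parts on $M$). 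The renormalization defining $\widetilde r^{(n)}$ is designed so that the divergent part of this drift, integrated up to $s = n$, is canceled, leaving the residual $\tau(\kappa_0^2)\bigl(\int_0^{nt} - \int_0^{n}\bigr) a(s)^2\, ds = \tau(\kappa_0^2)\log t + o(1)$. The oscillatory pieces at $2\kappa_c, 4\kappa_c$ are brought under control by a further Poisson-equation renormalization, solving $(L + 2ik\kappa_c) h = \cdot$ for $k = 1, 2$ and integrating by parts in $e^{2ik\theta_s}$; this produces residuals of order $\int a(s)^2\cdot s^{-1}\, ds$ which are summable.

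The surviving martingale part $M^{(n)}_t(c) := -\tfrac{1}{2\kappa_c}\mathrm{Im}\int_0^{nt} a(s) e^{2i\theta_s(\kappa_c)} \nabla g_{\kappa_c}(X_s)\cdot dX_s$ has predictable quadratic variation that, by ergodicity of $X_s$, is asymptotic to $\tfrac{1}{8\kappa_0^2}\|\nabla g_{\kappa_0}\|^2 \int_0^{nt} a(s)^2\, ds = \tau(\kappa_0^2)\log(nt) + o(1)$. After subtracting the reference value at $t = 1$, this converges to $\tau(\kappa_0^2)\log t$, which is precisely the quadratic variation of $\int_0^t \sqrt{\tau(\kappa_0^2)/s}\, dB^c_s$. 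Passing to a subsequence $\{n_k\}$ to fix the auxiliary deterministic phase $\widetilde\theta_n(\kappa_0)$ (as in the super-critical argument in Section 2.1), Rebolledo's functional martingale central limit theorem then yields convergence in distribution of $\widetilde r^{(n_k)}_t(\kappa_c)$ to a continuous centered Gaussian process, and the combined drift plus quadratic-variation computation identifies it as the unique solution of the stated SDE.

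For the joint structure in $c$, the cross-bracket of $M^{(n)}(c)$ and $M^{(n)}(c')$ carries the extra phase $e^{2i[\theta_s(\kappa_c) - \theta_s(\kappa_{c'})]}$ with $\theta_s(\kappa_c) - \theta_s(\kappa_{c'}) \sim (c - c')s/n$ on $s \in [0, n]$; its asymptotic behavior against $a(s)^2 \sim 1/s$ identifies the joint law of $\{B^c\}_c$ in the limit. The main technical obstacle is executing the iterated oscillatory renormalization uniformly in $(t, c)$ on compact subsets of $(0, 1) \times \mathbb{R}$: this requires uniform bounds on $(L + 2ik\kappa_c)^{-1}$ and its $\kappa$-derivatives, together with tightness of the continuous function-valued process $(t, c) \mapsto \widetilde r^{(n_k)}_t(\kappa_c)$, which in combination with the martingale FCLT yields the stated SDE characterization.
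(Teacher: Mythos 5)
Your overall route is the same as the paper's: the paper does not rederive the It\^o/Poisson-equation decomposition but imports it from \cite{KN1} (Lemma 6.2 there), then performs exactly the renormalization you describe, namely extracting $\langle F g_\kappa\rangle\int_0^{nt}a(s)^2ds$ via a second application of It\^o's formula with $L^{-1}$, subtracting $\langle F g_\kappa\rangle\int_0^{n}a(s)^2ds$ to leave the drift $\tau(\kappa_0^2)\log t$, and treating the surviving stochastic integral $Y_{nt}(\kappa_c)$ by a quadratic-variation computation plus a fourth-moment tightness bound on $[\epsilon,1]$. Two of your side remarks are only cosmetically off: the boundary and $a'(s)$ remainders do not vanish but converge to $t$-independent (random) constants, which is harmless since the SDE fixes the limit only up to an additive constant; and the subsequence is needed for Prokhorov extraction of a limit point of the function-valued process, not to freeze a phase as in the super-critical case (in the critical case $\theta_{nt}(\kappa_0)$ mod $\pi$ converges in law to a uniform variable, not a.s. along subsequences).

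The genuine gap is in your identification of the noise, i.e.\ the joint structure in $c$, which the lemma does assert through the locally uniform convergence in $(t,c)$ and which all subsequent steps use. You approximate the relative phase by its deterministic part, $\theta_s(\kappa_c)-\theta_s(\kappa_{c'})\sim (c-c')s/n$, and propose to read the joint law of $\{B^c\}$ off the resulting cross-brackets. In the critical regime this is false: the fluctuation $\widetilde\theta_{ns}(\kappa_c)-\widetilde\theta_{ns}(\kappa_{c'})$ is of order one and random, and by \cite{KN1} (Proposition 9.1, Lemma 9.3) the rescaled relative phase $\Theta^{(n)}_t(c)$ converges to the random process $\Theta_t(c)$ solving $d\Theta_t(c)=c\,dt+\sqrt{\tau(E_0)}\,\mathrm{Re}\bigl[(e^{2i\Theta_t(c)}-1)\,dZ_t/\sqrt{t}\bigr]$. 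The paper's cross-bracket limit is $\langle[g_{\kappa_{c_1}},g_{\kappa_{c_2}}]\rangle\int_s^t u^{-1}\exp[2i(\Theta_u(c_1)-\Theta_u(c_2))]\,du$, a random object, and the driving motions are constructed as $B^c_t=\mathrm{Im}\int e^{2i\Theta_s(c)}\,dZ_s$, so the family $(\widetilde r_\cdot(c))_c$ is not jointly Gaussian and is coupled to $\Theta$ through the same noise $Z$. Your deterministic-phase shortcut would give the wrong joint law (it is also incompatible with the $Sine_\beta$ statistics that the very same phase process produces), and it would break the later steps that rely on this coupling, namely Lemma \ref{Localcritical} (summing over $\lambda\in Sine_\beta$ defined through $\Theta_1(\lambda)$) and the Girsanov/intensity computation via $\phi_t(c)=\partial_c\Theta_t(c)$ in (\ref{phi}). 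For a single fixed $c$ your martingale-FCLT argument is fine, since then the phase factors cancel in the bracket; the missing ingredient is the convergence of $\Theta^{(n)}$ to the random $\Theta$ and its use in defining $B^c$ jointly in $c$.
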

%\end{screen}
%%%%%%%%%%%%%%%%%
%%%%%
\begin{proof}
Letting 
\beq
J_t (\kappa)
&:=&
\int_0^t 
a(s) e^{2i \theta_s(\kappa) } F(X_s) ds, 
\quad
J_t(0)
:=
\int_0^t 
a(s) 
%e^{2i \theta_s(\kappa) } 
F(X_s) ds
\eeq
we have
\beq
r_t(\kappa)
&=&
\frac {1}{2 \kappa} Im \, J_t (\kappa), 
\quad
\widetilde{\theta}_t(\kappa)
=
\frac {1}{2 \kappa} 
Re 
\left(
J_t (\kappa) - J_t(0)
\right).
%+ o(1).
\eeq
Here 
we use the following lemma in 
\cite{KN1}. \\
%

%
%\begin{itembox}[l]
{\bf Lemma 6.2 in \cite{KN1}}\\
(1)
\beq
\int_0^t
a(s) e^{2i \theta_s(\kappa)} F(X_s) ds
&=&
- \frac {i}{2 \kappa}
\int_0^t a(s)^2 F g_{\kappa} (X_s) ds
+
Y_t (\kappa)
+
\delta_t(\kappa)
\\
\mbox{ where }
\quad
Y_t (\kappa)
&:=&
\int_0^t 
a(s) e^{2i \theta_s(\kappa)}
\nabla g_{\kappa} (X_s) d X_s
\\
\delta_t (\kappa)
&:=&
\left[
a(s) e^{2i \theta_s(\kappa)} g_{\kappa} (X_s)
\right]_0^t
-
\int_0^t a'(s) e^{2i \theta_s(\kappa)} g_{\kappa}(X_s) ds
\\
&&
-
\frac {i}{\kappa}
\int_0^t
a(s)^2
\left(
\frac {e^{2i \theta_s(\kappa)}}{2} - 1
\right)
e^{2i \theta_s(\kappa)}
F g_{\kappa} (X_s) ds, 
\eeq
(2)
$\lim_{t \to \infty} \delta_t (\kappa)
=
\delta_{\infty}(\kappa)$, 
a.s. for some 
$\delta_{\infty}(\kappa)$,
\\
(3)
$\lim_{n \to \infty}
{\bf E}
\left[
\max_{0 \le t \le T}
\left|
\delta_{nt}(\kappa_c) - \delta_{nt} (\kappa_0)
\right|^2 
\right]
= 0$.\\
%
%\end{itembox}\\
%

By Ito's formula, 
\begin{equation}
Fg_{\kappa} (X_s) ds
=
\langle F g_{\kappa} \rangle ds
+
d L^{-1}
\left(
F g_{\kappa} - \langle F g_{\kappa} \rangle
\right)
-
\nabla L^{-1} 
\left(
F g_{\kappa} - \langle F g_{\kappa} \rangle
\right)
d X_s
\label{Ito}
\end{equation}
by which we further integrate by parts. 
\beq
\int_0^t a(s)^2 F g_{\kappa} (X_s) ds
&=&
\langle F g_{\kappa} \rangle
\int_0^t a(s)^2 ds
+
\widetilde{\delta}_t (\kappa)
\\
\mbox{ where }
\quad
\widetilde{\delta}_t (\kappa)
&:=&
\left[
a(s)^2 L^{-1} 
\left(
F g_{\kappa} - \langle F g_{\kappa} \rangle
\right)
\right]_0^t 
\\
&&
-
\int_0^t (a(s)^2)' L^{-1}
\left(
F g_{\kappa} - \langle F g_{\kappa} \rangle
\right)
ds
-
\int_0^t a(s)^2 \nabla L^{-1}
\left(
F g_{\kappa} - \langle F g_{\kappa} \rangle
\right)
d X_s. 
\eeq
$\widetilde{\delta}_t(\kappa)$
is a sum of convergent terms and one with finite quadratic variation so that it converges almost surely : 
\beq
\lim_{t \to \infty}
\widetilde{\delta}_t (\kappa)
=
%\exists 
\widetilde{\delta}_{\infty} (\kappa), 
\quad
a.s.
\eeq
Here 
we replace 
$t$
by 
$nt$, 
and take 
$n \to \infty$
limit with 
$t$ 
being fixed.
To cancel out 
the divergent term, we subtract 
%
%\beq
$
\langle F g_{\kappa} \rangle
\int_0^n a(s)^2 ds
$
%\eeq
%
from both sides and obtain 
\beq
\int_0^{nt} a(s)^2 F g_{\kappa} (X_s) ds
-
\langle F g_{\kappa} \rangle
\int_0^n a(s)^2 ds
&=&
\langle F g_{\kappa} \rangle
\log t  
+
\widetilde{\delta}_{\infty} (\kappa)
+ \epsilon_n (t), 
\quad
\lim_{n \to \infty} \epsilon_n (t) = 0.
\eeq
Therefore
\beq
J_{nt}(\kappa)
&=&
- \frac {i}{2 \kappa}
\langle F g_{\kappa} \rangle
\log t
+
Y_{nt} (\kappa)
+ A_n + \epsilon'_n (t), 
\quad
\lim_{n \to \infty} \epsilon'_n (t) = 0.
\\
where 
\quad
A_n 
&:=&
- \frac {i}{2 \kappa}
\left(
\langle F g_{\kappa} \rangle
\int_0^n a(s)^2 ds
+
\widetilde{\delta}_{\infty} (\kappa)
\right)
+
\delta_{\infty}(\kappa)
\eeq
We remark that 
$A_n$
is random but does not depend on 
$t$.
Let
\beq
r_t^{(n)} (\kappa) := r_{nt} (\kappa).  
\quad
%\widetilde{r}_t^{(n)} (\kappa) := \widetilde{r}_{nt} (\kappa)
\eeq
Then we have
\beq
r_{t}^{(n)}(\kappa)
&=&
\frac {1}{2 \kappa}
Im \;
[J_{nt}(\kappa)]
=
\widetilde{r}_t^{(n)} (\kappa)
%\widetilde{r}_{nt} (\kappa) 
+
\widetilde{A}_n
+
\widetilde{\epsilon}_n(t), 
\\
where 
\quad
\widetilde{r}^{(n)}_t (\kappa)
%\widetilde{r}_{nt}(\kappa)
&:=&
\frac {1}{2 \kappa}
Im \,
\Bigl(
- \frac {i}{2 \kappa}
\langle F g_{\kappa} \rangle \log t 
+
Y_{nt} (\kappa)
\Bigr)
\\
and
\quad
\widetilde{A}_n
&:=&
\frac {1}{2 \kappa}
Im \;[ A_n ], 
\quad
\lim_{n \to \infty}
\sup_{
\frac {\sqrt{\log n}}{n} \le t
}
\widetilde{\epsilon}_n (t) = 0.
\eeq
The density function of 
$\mu_E^{(n)}$
is equal to 
\beq
&&
\frac {
\exp [ 2 r_t^{(n)} ]
}
{
\int_0^1 
\exp [ 2 r_s^{(n)} ]
ds
}
\\
&=&
\frac {
\exp [ 2\widetilde{r}_t^{(n)}]
}
{
\int_0^{ \frac {\sqrt{\log n}}{n} }
\exp \bigl[ 2r_s^{(n)}- 2 \widetilde{A}_n -2\widetilde{\epsilon}_n(t)  
\bigr]
ds
+
\int_{ \frac {\sqrt{\log n}}{n} }^1
\exp [ 2\widetilde{r}_s^{(n)}-2\widetilde{\epsilon}_n(t)+2\widetilde{\epsilon}_n(s)]
ds
}. 
\eeq
To estimate the 1st term 
in the denominator, we note that 
\beq
| r_s^{(n)} |
& \le &
\frac {1}{ 2 \kappa }
\left|
Im \, 
\int_0^{ns}
e^{2i \theta_u(\kappa)} a(u) F(X_u) du
\right|
\le 
(Const.) 
\sqrt{ \log n }, 
\quad
0 \le s \le 
\frac {\sqrt{\log n}}{n}. 
\eeq
Moreover, 
$A_n = O(\log n)$, 
and denoting by 
$\sigma_F$
the spectral measure of 
$L$
w.r.t. 
$F$, 
we have 
\beq
\widetilde{A_n}
=
Im 
\Bigl[ - \dfrac {i}{2 \kappa} \langle F g_{\kappa} \rangle
\Bigr]
&=&
\frac {1}{2 \kappa}
\int_{-\infty}^0
\frac { - \lambda}{ \lambda^2 + (2 \kappa)^2 }
d \sigma_F (\lambda)
> 0, 
\eeq
which yields
\beq
\int_0^{ \frac {\sqrt{\log n}}{n} }
\exp \bigl[ r_s^{(n)}-  \widetilde{A}_n -\widetilde{\epsilon}_n(t) 
\bigr]
ds
\le
\frac { \sqrt{ \log n } }{n}
\cdot
\exp \left[
- (Const.) \log n
\right]
=
O(n^{ -\delta }), 
%\quad\exists \delta > 0.
\eeq
for some 
$\delta > 0$. 
On the other hand, 
\beq
\int_0^{ \frac {\sqrt{\log n}}{n} }
\exp 
\left[
\widetilde{r}_s^{(n)} 
\right] ds
=
\int_0^{ \frac {\sqrt{\log n}}{n} }
\exp 
\left[
r_s^{(n)} - \widetilde{A}_n - \widetilde{\epsilon}_n(s)\right] ds
=
O(n^{ -\delta' })
\eeq
Thus we have 
\beq
\frac {
e^{2 r_{t}^{(n)}(\kappa)} dt
}
{
\int_0^1
e^{2 r_{s}^{(n)}(\kappa)} ds
}
-
\frac {
e^{2 \widetilde{r}^{(n)}_{t}(\kappa)} dt
}
{
\int_0^1
e^{2 \widetilde{r}^{(n)}_{s}(\kappa)} ds
}
\stackrel{a.s.}{=} o(1)
\eeq
so that we henceforth consider 
$\widetilde{r}_t^{(n)} (\kappa)$
instead of 
$r_t^{(n)}(\kappa)$.
Let 
\beq
\Theta^{(n)}_t (c)
&:=&
\theta_{nt} (\kappa_c) - \theta_{nt}(\kappa_0), 
\quad
Y^{(n)}_t (\kappa_c)
:=
Y_{nt} (\kappa_c).
\eeq
Then 
\beq
Y^{(n)}_t (\kappa_c)
&=&
\int_0^{nt}
a(s) e^{2i \theta_s(\kappa_c)} 
\nabla g_{\kappa_c} (X_s) d X_s
\\
&=&
\int_0^{nt}
a(s)
e^{2i 
\left(
\theta_s(\kappa_c) - \theta_s(\kappa_0)
\right)
}
e^{2i \theta_s(\kappa_0)}
\nabla g_{\kappa_c} (X_s) dX_s
\quad
\eeq
and by 
\cite{KN1}
Proposition 9.1 and Lemma 9.3, 
we have 
$\Theta_u^{(n)}(c)
\stackrel{d}{\to}
\Theta_u (c)$, 
$\theta_{nu}(\kappa_0) 
\stackrel{d}{\to} 
U$.
Where 
$U \sim unif[0, \pi)$, 
$\Theta_u (c)$ 
and 
$U$
are independent, and these convergence is uniform w.r.t.
$u \in [0,1]$
and 
$c \in {\bf R}$.
Moreover, 
$\Theta_t (c)$
is characterized by the following SDE. 
\beq
d \Theta_t (c)
&=&
c dt
+
\sqrt{ \tau (E_0) }
Re
\left[
\left(
e^{2i \Theta_s (c) } - 1 
\right)
\frac {d Z_t}{\sqrt{t}}
\right], 
\quad
\Theta_0(c) = 0. 
\eeq
$Z_t$
is a complex Brownian motion.
By Skorohod's theorem, 
we can assume that 
$\Theta_u^{(n)}(c)
\stackrel{a.s.}{\to}
\Theta_u (c)$, 
$\theta_{nu}(\kappa_0) 
\stackrel{a.s.}{\to} 
U$.
Thus for 
$0 < s < t$, 
we use 
(\ref{Ito}) 
and integrate by parts to yield
\beq
&&
\left\langle 
Y^{(n)} (\kappa_{c_1}), 
\overline{ Y^{(n)} (\kappa_{c_2}) }
\right\rangle_t
-
\left\langle 
Y^{(n)} (\kappa_{c_1}), 
\overline{
Y^{(n)} (\kappa_{c_2})
}
\right\rangle_s
\\
&=&
\int_s^t 
a(nu)^2
\exp \left[2i 
\left(
\Theta^{(n)}_u (c_1) - \Theta^{(n)}_u (c_2)
\right)
\right]
\left[
g_{\kappa_{c_1}}, g_{\kappa_{c_2}} 
\right]
(X_{nu}) n du
\\
& \stackrel{n \to \infty}{\to} &
\langle 
\left[
g_{\kappa_{c_1}}, g_{\kappa_{c_2}} 
\right]
\rangle
\int_s^t \frac {du}{u} 
\exp \left[2i 
\left(
\Theta_u (c_1) - \Theta_u (c_2)
\right)
\right]
\eeq
and together with martingale inequality, we obtain
\beq
{\bf E}
\left[
| Y_t^{(n)}(\kappa_c) - Y_s^{(n)}(\kappa_c) |^4
\right]
\le
C (t-s)^2, 
\quad
0 < s < t.
\eeq
Therefore, 
for any fixed 
$\epsilon > 0$, 
and for any 
$t, s \in [\epsilon, 1]$, 
we have a tightness condition. 
\beq
&(1)&
\quad
\limsup_{A \to \infty}
\sup_{n > 0}
{\bf P}
\left(
%\sup_{\kappa \in I}
| \widetilde{r}_{nt}(\kappa) | \ge A
\right)
= 0
\\
&(2)&
\quad
\lim_{\delta \downarrow 0}
\limsup_{n \to \infty}
{\bf P}
\left(
%\sup_{\kappa \in I}
\sup_{t, s \in [\epsilon, 1], 
\;
|t-s| < \delta}
| \widetilde{r}^{(n)}_t (\kappa) - \widetilde{r}^{(n)}_s (\kappa) |
> \rho
\right) = 0, 
\quad
\rho > 0.
\eeq
Let 
$\widetilde{r}_t(c)$
be a limit point of 
$\widetilde{r}_t^{(n)} (\kappa_c)$.
We then have 
\beq
d \widetilde{r}_t (c)
=
\frac {
e(\kappa_0)
}
{t}
dt
%\frac {dt}{t}
+
\frac { f(\kappa_0) }{ \sqrt{t} }
Im \, 
\left[
e^{2i \Theta_t (c)} 
d Z_t
\right], 
\quad 
t > 0 
%\quad\cdots (*)
\eeq
where 
\beq
e(\kappa) &:=& 
- 
\frac {1}{2 \kappa}
Im 
\left[
\frac {2i}{2 \kappa}
\cdot 
\frac 12
\langle F g_{\kappa} \rangle
\right]
=
f(\kappa)^2
=
\tau(\kappa^2)
\\
f(\kappa) &:=& 
\frac {1}{2 \kappa}
\sqrt{ \frac {[g_{\kappa}, \overline{g_{\kappa}}] }{2} }
=
\sqrt{\tau(\kappa^2)}
\eeq
Letting 
$B_t^c := Im [ e^{2i \Theta_t (c)} Z_t ]$, 
we complete the proof of Lemma \ref{Radial}. 
\QED
\end{proof}
%%%%%%
%
We next consider 
%
%\beq
$
\phi_t (c)
:=
\dfrac {\partial \Theta_t(c)}{\partial c}
$
which satisfies 
\beq
d \phi_t(c)
&=&
dt - 
\sqrt{
\frac {\tau(\kappa_0^2)}{t}
}
%\frac {f(\kappa_0)}{\sqrt{t}}
d B_t^c
\cdot
2 \phi_t(c), 
\quad
\phi_0 (c) = 0.
\eeq
Since 
the joint distribution of 
$\widetilde{r}_t (c)$
and 
$\phi_t (c)$
does not depend on 
$c$
so that we ignore the 
$c$-dependence of 
$B_t^c$. 
They now satisfy 
\beq
d 
\widetilde{r}_t (c)
&=&
\frac {
\tau(\kappa_0^2)
}{t}
dt
+
\sqrt{
\frac {
\tau(\kappa_0^2)
}{t}
}
d B_t^{c}, 
\quad
t > 0
\\
d \phi_t(c)
&=&
dt - 
\sqrt{
\frac {\tau(\kappa_0^2)}{t}
}
%\frac {f(\kappa_0)}{\sqrt{t}}
d B_t^c
\cdot
2 \phi_t(c), 
\quad
\phi_0 (c) = 0.
\eeq
By the 
change of variable, 
\beq
t = t(v) = 
\exp
\left[
\frac {v}{\tau(\kappa_0^2)}
\right], 
\quad
v(t) = \tau(\kappa_0^2) \log t, 
\quad
t \in (0, 1), 
\;
v \in (-\infty, 0)
\eeq
we have
\beq
d\widetilde{r} (c)
&=&
dv + d B_v
\\
d \phi
&=&
\frac {t}{\tau(\kappa_0^2)}dv 
-
2 \phi \cdot d B_v
\eeq
so that
\begin{equation}
\phi_w
=
\int_{-\infty}^w
\frac {t(v)}{\tau(\kappa_0^2)}
e^{\widetilde{r}_v - \widetilde{r}_w}dv
\label{phi}
\end{equation}
which is the same equation satisfied by 
$r_t (c), \phi_t (c)$
in DC model derived in 
\cite{RV}, except that we have 
$v \in (-\infty, 1]$ 
while 
$t \in [0,1]$
in DC model. 
As is done in 
\cite{RV}, 
by  Girsanov's formula we have

%%%%%%%%%%%%%%%
%\begin{screen}
\begin{lemma}
\label{Girsanov}
Let  
$\epsilon > 0$. 
Let 
$R(\omega)$ 
be the distribution of 
$\widetilde{r}_v (c)$
as an element of 
$C[v(\epsilon), v(1)]$.
Then under 
$d Q(\omega) := 
e^{\omega_v - \omega_{v(1)}} d R(\omega)$
we have 
\beq
\widetilde{r}_s
=
f^v (s) + \widehat{B}_s, 
\quad
f^v (s)
:=
v - |s - v|, 
\quad
s \in [v(\epsilon), v(1)]
\eeq
where 
$\widehat{B}_s$
is a Brownian motion. 
\end{lemma}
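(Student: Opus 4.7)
The plan is a direct application of Girsanov's theorem to the explicit Gaussian SDE obtained just before the statement. By Lemma \ref{Radial} together with the change of variable $v = \tau(\kappa_0^2)\log t$, under the measure $R$ the process $\widetilde{r}_s$ is a standard Brownian motion with constant drift driven by some BM $B$, so that on $[v(\epsilon), v(1)]$ one has
$$\widetilde{r}_s = \widetilde{r}_{v(\epsilon)} + (s - v(\epsilon)) + (B_s - B_{v(\epsilon)}).$$
The goal is to identify the tilted process.

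First, I would rewrite the tilt $e^{\omega_v - \omega_{v(1)}}$ as the exponential of a stochastic integral. Splitting
$$\omega_v - \omega_{v(1)} = (v - v(1)) + (B_v - B_{v(1)}),$$
and representing the stochastic part as $-\int_v^{v(1)} dB_s$, one sees that there is a bounded, deterministic, piecewise-constant shift $\lambda_s$ supported on $[v, v(1)]$ such that
$$e^{\omega_v - \omega_{v(1)}} = C \cdot M_{v(1)},$$
where $C$ is a deterministic constant absorbing the drift contribution and the Ito correction $\tfrac12 \int \lambda_s^2 ds$, and $M_s = \exp\bigl(\int_{v(\epsilon)}^s \lambda_u dB_u - \tfrac12 \int_{v(\epsilon)}^s \lambda_u^2 du\bigr)$ is the Girsanov exponential martingale. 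Boundedness of $\lambda$ makes Novikov's criterion immediate, so $M$ is a genuine martingale.

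Second, Girsanov's theorem then states that $\widehat{B}_s := B_s - \int_{v(\epsilon)}^s \lambda_u du$ is a standard Brownian motion under the normalized $Q$. Substituting $B_s = \widehat{B}_s + \int_{v(\epsilon)}^s \lambda_u du$ back into the expression for $\widetilde{r}_s$ produces a new drift $1 + \lambda_s$, which is piecewise constant with a break at $s = v$. Integrating this piecewise-constant drift from $v(\epsilon)$ (and absorbing the initial value) yields precisely the tent function $f^v(s) = v - |s-v|$, so that $\widetilde{r}_s = f^v(s) + \widehat{B}_s$ as claimed.

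The only delicate point is the bookkeeping that fixes the sign and magnitude of $\lambda$ (together with the deterministic normalization $C$) so that the shifted drift reproduces slope $+1$ on $[v(\epsilon), v]$ and slope $-1$ on $[v, v(1)]$ dictated by $f^v$. Apart from this step, which is forced by matching the quadratic variation and drift of $\omega_v - \omega_{v(1)}$, the argument is an entirely standard Girsanov computation, exactly parallel to the corresponding step in \cite{RV}.
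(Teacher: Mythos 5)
Your overall route --- pass to the variable $v=\tau(\kappa_0^2)\log t$ where $d\widetilde{r}_s = ds + dB_s$, write the weight as a Girsanov exponential supported on $[v,v(1)]$, and read off the shifted drift --- is exactly the intended argument (the paper itself only invokes Girsanov as in \cite{RV}), but the step you defer as ``bookkeeping'' is precisely where the proof fails, and it cannot be fixed by bookkeeping. With $d\widetilde{r}_s=ds+dB_s$, the martingale part of $\omega_v-\omega_{v(1)}=-(v(1)-v)-(B_{v(1)}-B_v)$ forces $\lambda_s=-1_{[v,v(1)]}(s)$ (there is no freedom in its ``sign and magnitude''), so Girsanov shifts the drift of $\widetilde{r}$ on $[v,v(1)]$ from $+1$ to $1+\lambda_s=0$, not to $-1$: the tilted shape is $s\wedge v$ plus a Brownian motion, not the tent $f^v(s)=v-|s-v|$, which would require $\lambda_s=-2$. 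Moreover your constant is $C=e^{-(v(1)-v)/2}\neq 1$, so $e^{\omega_v-\omega_{v(1)}}dR$ is not a probability measure without normalization; this is not harmless, because Lemma \ref{intensity} uses the \emph{unnormalized} weight inside an expectation, where a stray factor would change the intensity.

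The resolution is a factor $2$ that your argument (and the printed statement) misses. Solving $d\phi=\frac{t}{\tau(\kappa_0^2)}dv-2\phi\,dB_v$ by variation of constants gives the propagator $e^{-2(B_w-B_v)-2(w-v)}=e^{-2(\widetilde{r}_w-\widetilde{r}_v)}$, i.e.\ $\phi_w=\int_{-\infty}^{w}\frac{t(v)}{\tau(\kappa_0^2)}\,e^{2(\widetilde{r}_v-\widetilde{r}_w)}dv$, so the weight relevant for (\ref{phi}) and Lemma \ref{Girsanov} is $e^{2(\omega_v-\omega_{v(1)})}$. With that weight everything clicks: since the drift of $\widetilde{r}$ equals its quadratic-variation rate, $e^{2(\omega_v-\omega_{v(1)})}=\exp\bigl(-2(B_{v(1)}-B_v)-2(v(1)-v)\bigr)$ is exactly an exponential martingale (so it integrates to one and no normalizing constant appears), and Girsanov moves the drift on $[v,v(1)]$ from $+1$ to $1-2=-1$, whose integral is precisely $f^v(s)=v-|s-v|$ up to an additive constant that cancels after normalizing the measure. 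This is the computation in \cite{RV}, and it is what the symmetric profile $\exp\bigl(2{\cal Z}_{\tau\log(t/U)}-2\tau|\log\frac tU|\bigr)$ in Theorems 1.1--1.2 requires. As written, your proposal asserts the tent at exactly the point where the claimed matching in fact yields slope $0$; you should either prove the corrected statement with the exponent $2$, or observe that with the weight as literally stated the conclusion does not hold.
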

%
%\end{screen}
%
%
%%%%%%%%%%%%%%%%%%%%%%%%%%%%%%%%%%%%%%%%%%%%%%%%%%%%%%%%
\subsubsection{Local version for the critical case}
In this section 
we prove Theorem 1.2 for 
$\alpha = 1/2$. 
Let 
$\{ n_k \}$, 
$\widetilde{r}_t (c)$
be those in  
Lemma \ref{Radial}.
%
%\begin{screen}
%
\begin{lemma}
\label{Localcritical}
Let 
\beq
\Lambda_{n, E_0}
:=
\left\{
n \left(
\sqrt{E_j (n)} - \sqrt{E_0}
\right)
\right\}_{j \ge 1}.
\eeq
Then we have 
\beq
&&
\left\{
\left(
\lambda, e^{\widetilde{r}^{(n_k)}_t (\kappa_{\lambda})} 
\right)
\, \middle| \, 
\lambda \in \Lambda_{n_k, E_0}
\right\}
%_{ \lambda \in \Lambda_{L, E_0}}
\stackrel{d}{\to}
\left\{
\left( 
\lambda, e^{ \widetilde{r}_t (\lambda) }
\right)
\, \middle| \, 
\lambda \in Sine_{\beta}
\right\}
\\
&&
\left\{
\left(
\lambda, 
%\mu_{E_j (n)}^{(n_k)}
\mu_{ \kappa_{\lambda}^2 }^{(n_k)}
%e^{\widetilde{r}^{(n)}_t (\lambda)} 
\right)
\, \middle| \, 
\lambda
%=
%n \left(
%\sqrt{E_j (n)} - \sqrt{E_0}
%\right) 
\in \Lambda_{n_k, E_0}
\right\}
%_{ \lambda \in \Lambda_{L, E_0}}
\stackrel{d}{\to}
\left\{
\left( 
\lambda, 
\frac {
e^{ 2\widetilde{r}_t (\lambda) }dt
}
{
\int_0^1
e^{ 2\widetilde{r}_s (\lambda) }
ds
}
\right)
\, \middle| \, 
\lambda \in Sine_{\beta}
\right\}
\quad
\eeq
in the sence of convergence in distribution on the sequences of point processes on 
${\bf R} \times C(0,1)$
and 
${\bf R} \times {\cal P}(0,1)$
respectively.
\end{lemma}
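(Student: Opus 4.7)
The plan is to upgrade the two separate convergences already available---the level-statistics result $\Lambda_{n_k,E_0} \stackrel{d}{\to} \mathrm{Sine}_\beta$ from \cite{KN2} and the functional limit $\widetilde{r}^{(n_k)}_t(\kappa_c) \stackrel{d}{\to} \widetilde{r}_t(c)$ from Lemma \ref{Radial}---into a single joint convergence of marked point processes. Inspecting the proof of Lemma \ref{Radial}, one sees that $\Theta^{(n_k)}_t(c)$ and $\widetilde{r}^{(n_k)}_t(\kappa_c)$ converge jointly and locally uniformly in $(t,c)$ to a coupled pair driven by the same complex Brownian motion $Z$. By Skorohod's representation theorem I may assume the convergence is almost sure in $C_{\mathrm{loc}}([0,1] \times \mathbf{R})$ on a common probability space.

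Next, I interpret the rescaled eigenvalues as zeros of an explicit continuous function of $c$. By Sturm's oscillation theorem, $\lambda \in \Lambda_{n,E_0}$ if and only if $\Theta^{(n)}_1(\lambda) + \theta_n(\kappa_0) \in \pi\mathbf{Z}$, and the strict monotonicity $\partial_c \Theta^{(n)}_t(c) > 0$ (which follows from the SDE for $\phi_t(c)$ recorded just before (\ref{phi})) makes each zero isolated and continuous in the underlying coefficients. The limiting zero set of $c \mapsto \Theta_1(c)$ modulo $\pi$ is the $\mathrm{Sine}_\beta$ process \cite{KN2}. Along the Skorohod coupling, each $\lambda^{(n_k)}_j$ lying in a bounded window of $\mathbf{R}$ converges almost surely to the corresponding $\mathrm{Sine}_\beta$ point $\lambda_j$, and the locally uniform convergence of $\widetilde{r}^{(n_k)}_t(\kappa_c)$ in $c$ then yields $\exp[\widetilde{r}^{(n_k)}_\cdot(\kappa_{\lambda^{(n_k)}_j})] \to \exp[\widetilde{r}_\cdot(\lambda_j)]$ almost surely in $C(0,1)$. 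Testing against an arbitrary $h \in C_c(\mathbf{R} \times C(0,1))$ and invoking the continuous mapping theorem delivers the first assertion.

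For the second assertion I would combine the first with the identity
\begin{equation*}
\mu^{(n)}_{\kappa_\lambda^2}(dt) = \frac{\exp[2\widetilde{r}^{(n)}_t(\kappa_\lambda)]\,dt}{\int_0^1 \exp[2\widetilde{r}^{(n)}_s(\kappa_\lambda)]\,ds} + o(1),
\end{equation*}
derived in the proof of Lemma \ref{Radial}, and apply the continuous mapping theorem to the functional $f \mapsto e^{2f(t)}\,dt / \int_0^1 e^{2f(s)}\,ds$ from $C(0,1)$ to $\mathcal{P}(0,1)$. Continuity of this map is immediate on the subset of $C(0,1)$ where the denominator integral is finite and positive; the Girsanov representation $\widetilde{r}_s = f^v(s) + \widehat{B}_s$ from Lemma \ref{Girsanov} guarantees integrability of $e^{2\widetilde{r}_t}$ near $t = 0$ almost surely, so the limit point measures are well-defined probability measures on $(0,1)$.

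The main obstacle is the joint, vague convergence of marked point processes over the random index set $\Lambda_{n_k,E_0}$: one must verify tightness of the marks uniformly over eigenvalues lying in any compact window of the rescaled energy, rather than merely at fixed $c$. This is addressed by the locally uniform (in $c$) tightness estimates already produced in the proof of Lemma \ref{Radial}; given those estimates, the compactness of $\mathrm{supp}\, h$ for $h \in C_c(\mathbf{R} \times C(0,1))$ reduces the problem to a finite-dimensional continuous mapping argument, and the identification of the limit is forced by the coupling with the $\mathrm{Sine}_\beta$ process.
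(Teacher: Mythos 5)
Your argument follows essentially the same route as the paper: both reduce the lemma to the characterization $\lambda \in \Lambda_{n,E_0} \Leftrightarrow \Theta_1^{(n)}(\lambda) \in \pi{\bf Z} - (\theta_1^{(n)}(\kappa_0))_{\pi{\bf Z}}$ combined with the joint locally uniform (in $t$ and $c$) convergence of the relative Pr\"ufer phase and of $\widetilde{r}^{(n)}_t(\kappa_c)$ from Lemma \ref{Radial}, and then pass from the marks $\widetilde{r}$ to $e^{\widetilde{r}}$ and to the normalized measure via the continuous mapping theorem, using the near-$t=0$ estimates from the proof of Lemma \ref{Radial} for the denominator. The only real difference is presentational: where you re-derive the marked point-process convergence by a Skorohod coupling together with monotonicity of $c \mapsto \Theta_1^{(n)}(c)$ and convergence of individual zeros, the paper simply invokes Lemma 3.1 of \cite{KN1}, which encapsulates exactly that step.
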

%\end{screen}\\
%

Any limit point 
$\widetilde{r}_t (c)$
of 
$\widetilde{r}_t^{(n)}(\kappa_c)$
satisfies 
(\ref{SDE})
so that they only differ up to constant.
Hence 
$
e^{ 2\widetilde{r}_t (\lambda) }dt
/
\int_0^1
e^{ 2\widetilde{r}_s (\lambda) }
ds
$
is uniquely determined and we do not need to take subsequence anymore which yields  
%

%
%\begin{screen}
\begin{corollary}
\beq
\left\{
\left(
\lambda, 
%\mu_{E_k(n)}^{(n)}
\mu_{ \kappa_{\lambda}^2 }^{(n)}
%e^{\widetilde{r}^{(n)}_t (\lambda)} 
\right)
\, \middle| \, 
\lambda \in \Lambda_{n, E_0}
\right\}
%_{ \lambda \in \Lambda_{L, E_0}}
\stackrel{d}{\to}
\left\{
\left( 
\lambda, 
\frac {
e^{ 2\widetilde{r}_t (\lambda) }dt
}
{
\int_0^1
e^{ 2\widetilde{r}_s (\lambda) }
ds
}
\right)
\, \middle| \, 
\lambda \in Sine_{\beta}
\right\}.
\eeq
\end{corollary}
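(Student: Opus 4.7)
The plan is to derive the corollary from Lemma \ref{Localcritical} by a routine subsequence argument, exploiting that the SDE (\ref{SDE}) characterizes the radial limit $\widetilde{r}_t(\lambda)$ only up to an additive constant, whereas the normalization functional $\Phi : \widetilde{r} \mapsto e^{2\widetilde{r}_t}\, dt \big/ \int_0^1 e^{2\widetilde{r}_s}\, ds$ is manifestly invariant under the shift $\widetilde{r} \mapsto \widetilde{r} + C$. Consequently the ambiguity in the radial limit disappears after applying $\Phi$, and the subsequential convergence in Lemma \ref{Localcritical} promotes to a full convergence statement for the measure part.

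Concretely, I would use the standard criterion that $X_n \stackrel{d}{\to} X$ if and only if every subsequence of $(X_n)$ admits a further subsequence converging in distribution to $X$. Given any subsequence of $\{n\}$, the tightness established in the proof of Lemma \ref{Radial}, together with the convergence $\Lambda_{n,E_0} \stackrel{d}{\to} Sine_{\beta}$ at the reference energy $E_0$, allow the extraction of a sub-subsequence along which the joint point process $\{(\lambda, \widetilde{r}^{(\cdot)}_t(\kappa_\lambda))\}_{\lambda \in \Lambda}$ converges on ${\bf R} \times C(0,1)$. By Lemma \ref{Radial}, the limiting radial process must satisfy (\ref{SDE}), hence agrees in law with the $\widetilde{r}_t(\lambda)$ of Lemma \ref{Localcritical} up to a (random) additive constant that may depend on $\lambda$ but not on $t$.

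Applying the shift-invariant continuous map $\Phi$ then yields the same measure limit $e^{2\widetilde{r}_t(\lambda)} dt \big/ \int_0^1 e^{2\widetilde{r}_s(\lambda)} ds$ regardless of the sub-subsequence chosen, which proves the corollary by the subsequence criterion.

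The one delicate point is continuity of $\Phi$ on the relevant subset of $C(0,1)$ (which is where I expect the main technical care): the denominator could in principle be sensitive to the behaviour of $\widetilde{r}_s$ near $s=0$, since the drift term in (\ref{SDE}) makes $\widetilde{r}_s \sim \tau(\kappa_0^2) \log s$ diverge there. However, the estimates of Section 2.2.1 of the type $\int_0^{\sqrt{\log n}/n} e^{2 r_s^{(n)}} ds = O(n^{-\delta'})$ transfer to the limit and show that, almost surely, $\int_0^1 e^{2\widetilde{r}_s} ds < \infty$ and the contribution from any interval $[0,\varepsilon]$ tends to zero with $\varepsilon$. This renders $\Phi$ continuous along the limit path and closes the argument.
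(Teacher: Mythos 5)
Your argument is correct and is essentially the paper's own: the paper derives the corollary from Lemma \ref{Localcritical} by observing that any subsequential limit of $\widetilde{r}^{(n)}_t(\kappa_\lambda)$ satisfies the SDE (\ref{SDE}) and hence is determined up to an additive constant, which the normalization $e^{2\widetilde{r}_t}\,dt/\int_0^1 e^{2\widetilde{r}_s}\,ds$ removes, so no subsequence is needed. You merely make explicit the tightness/sub-subsequence criterion and the shift-invariance and continuity of the normalization map (note that near $s=0$ the drift sends $\widetilde{r}_s\to-\infty$, so $e^{2\widetilde{r}_s}$ vanishes there and integrability is automatic), which the paper leaves implicit.
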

%\end{screen}\\
%

%
{\it Proof of Lemma \ref{Localcritical}}\\
As in 
the proof of Theorem 1.2 for super-critical case, we use 
$\xi_n \stackrel{d}{\to} \xi
\Longleftrightarrow
\xi_n (f) \stackrel{d}{\to} \xi (f)$, 
$f \in C_{c}({\bf R} \times C(0,1))$, and assume 
$f(\lambda, \phi) = h_1 (\lambda) \cdot h_2(\phi)$, 
$h_1 \in C_c ({\bf R})$, 
$h_2 \in C(C[0,1])$.
Let 
$\theta^{(n)}_t (\kappa)
:=
\theta_{nt} (\kappa)$
and we recall 
\beq
&&
\lambda \in \Lambda_{n, E_0}
\quad
\Longleftrightarrow
\quad
\Theta^{(n)}_1(\lambda)
\in 
\pi {\bf Z} - ( \theta_1^{(n)}(\kappa_{0}) )_{\pi {\bf Z}}
\\
&&
\lambda \in Sine_{\beta}
\quad
\stackrel{def}{\Longleftrightarrow}
\quad
\Theta_1 (\lambda) 
\in 
\pi {\bf Z} + U
\\
&&
\theta^{(n)}_t (\kappa_{0}) 
\stackrel{d}{\to} 
U
\eeq
where 
$U \sim unif [0, \pi)$
and independent from 
$\Theta_1 (\lambda)$. 
Since 
$\widetilde{r}_t^{(n_k)} (\kappa_{c})$
converges to 
$\widetilde{r}_t(c)$
locally uniformly w.r.t. 
$t \in (0,1)$ 
and 
$c \in {\bf R}$, 
\cite{KN1} Lemma 3.1
yields
\beq
\sum_{\lambda\, : \,
\Theta_1^{(n)}(\lambda) \in \pi {\bf Z} - (\theta^{(n)}_1(\kappa_0))_{ \pi {\bf Z}}}
h_1 (\lambda)
h_2 (\widetilde{r_t}^{(n)}(\kappa_{\lambda}))
& \stackrel{d}{\to} &
\sum_{\lambda\, : \,
\Theta_1(\lambda) \in \pi {\bf Z} - unif [0, \pi)}
h_1 (\lambda)
h_2 (\widetilde{r_t}(\lambda))
\eeq
which implies 
\beq
\left\{
(\lambda, \widetilde{r}_{t}^{(n)}(\kappa_{\lambda}))
\, \middle| \,
\lambda \in \Lambda_{n, E_0}
\right\}
\stackrel{d}{\to}
\left\{
(\lambda, \widetilde{r}_{t}(\lambda))
\, \middle| \,
\lambda \in Sine_{\beta}
\right\}. 
\eeq
We note that 
$\widetilde{r}_t^{(n)}
\stackrel{loc. unif}{\to} 
\widetilde{r}_t$
implies 
$e^{\widetilde{r}_t^{(n)}}
\stackrel{loc. unif}{\to} 
e^{\widetilde{r}_t}$, 
and furthermore 
$e^{\widetilde{r}_t^{(n)}}
\stackrel{loc. unif}{\to} 
e^{\widetilde{r}_t}$
in turn implies 
$e^{\widetilde{r}_t^{(n)}}dt
\stackrel{v}{\to} 
e^{\widetilde{r}_t}dt$. 
By using 
continuous mapping theorem twice, we have 
\beq
e^{\widetilde{r}_t^{(n)}}dt
\stackrel{d}{\to} 
e^{\widetilde{r}_t}dt
\eeq
which proves the first statement. 
The second one is proved similarly.
\QED\\
We turn to  
derive the intensity measure of 
$\Xi_{E_0}$. 
%

%%%%%%%%%%%%%
%\begin{screen}
%
\begin{lemma}
\label{intensity}
%{\bf Lemma 4.2}
For 
$G \in C_b({\bf R} \times C(0,1))$, 
one has
\beq
{\bf E}
\left[
\sum_{\lambda \in Sine_{\beta} }
G(\lambda, \widetilde{r}(\lambda))
\right]
&=&
\frac {1}{\pi}
\int_{\bf R} d \lambda
\,
{\bf E}
\left[
G(\lambda, 
B_{v(\cdot)} + v(U) - |v(\cdot) - v(U)|
)
\right].
\eeq
where 
\beq
v(t) = \tau (\kappa_0^2)\log t, 
\quad
U \sim unif [0,1].
\eeq
\end{lemma}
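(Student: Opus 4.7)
The plan is to compute the intensity via a Kac--Rice/Palm argument combined with the Girsanov identity of Lemma \ref{Girsanov}. The starting observation, already exploited in Lemma \ref{Localcritical}, is the characterization $\lambda\in Sine_{\beta}$ iff $\Theta_1(\lambda)\in\pi{\bf Z}+U$ with $U\sim unif[0,\pi)$ independent of $\{\Theta_1(\lambda)\}_{\lambda\in{\bf R}}$. The derivative $\phi_t(c)=\partial_c\Theta_t(c)$ satisfies the linear SDE displayed just before (\ref{phi}) with $\phi_0(c)=0$, so $\phi_t(c)>0$ almost surely and $\lambda\mapsto\Theta_1(\lambda)$ is strictly increasing. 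The coarea formula then gives
\beq
\sum_{\lambda\in Sine_{\beta}}G(\lambda,\widetilde r(\lambda))=\int_{{\bf R}}d\lambda\,\phi_1(\lambda)\sum_{n\in{\bf Z}}\delta(\Theta_1(\lambda)-U-n\pi)\,G(\lambda,\widetilde r(\lambda)),
\eeq
and averaging over $U\in[0,\pi)$ first kills the delta sum and leaves a factor $1/\pi$, producing
\beq
{\bf E}\Bigl[\sum_{\lambda\in Sine_{\beta}}G(\lambda,\widetilde r(\lambda))\Bigr]=\frac{1}{\pi}\int_{{\bf R}}d\lambda\,{\bf E}\bigl[\phi_1(\lambda)\,G(\lambda,\widetilde r(\lambda))\bigr].
\eeq

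Next I would rewrite $\phi_1(\lambda)$ via (\ref{phi}) evaluated at $w=v(1)=0$. The substitution $s=e^{v/\tau(\kappa_0^2)}$ converts $\frac{t(v)}{\tau(\kappa_0^2)}\,dv$ into $ds$, so that
\beq
\phi_1(\lambda)=\int_0^1 e^{\widetilde r_{v(s)}(\lambda)-\widetilde r_{v(1)}(\lambda)}\,ds.
\eeq
Inserting this formula and exchanging the $s$-integration with the expectation reduces the task to evaluating, for each fixed $s\in(0,1)$, the weighted expectation ${\bf E}\bigl[e^{\widetilde r_{v(s)}-\widetilde r_{v(1)}}\,G(\lambda,\widetilde r)\bigr]$.

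This is precisely where Lemma \ref{Girsanov} enters: the weight $e^{\omega_{v(s)}-\omega_{v(1)}}$ is exactly the Radon--Nikodym derivative $dQ/dR$ with parameter $v=v(s)$, and under $Q$ the process $u\mapsto\widetilde r_u$ has the law of $f^{v(s)}(u)+\widehat B_u=v(s)-|u-v(s)|+\widehat B_u$. Hence that expectation equals ${\bf E}\bigl[G\bigl(\lambda,\,B_{v(\cdot)}+v(s)-|v(\cdot)-v(s)|\bigr)\bigr]$, and averaging $s\in(0,1)$ against Lebesgue measure --- which is exactly the expectation ${\bf E}_U[\cdot]$ for $U\sim unif(0,1)$ --- produces the right-hand side of Lemma \ref{intensity}.

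The main technical obstacle is the passage from the bounded interval $[v(\epsilon),v(1)]$ on which Lemma \ref{Girsanov} is stated to the full range $(-\infty,0]$ corresponding to $t\in(0,1)$; the plan is to apply the identity on $[v(\epsilon),0]$ and then let $\epsilon\downarrow 0$, using boundedness of $G$ and the uniform $L^4$-bound ${\bf E}[|Y^{(n)}_t(\kappa_c)-Y^{(n)}_s(\kappa_c)|^4]\le C(t-s)^2$ from the proof of Lemma \ref{Radial} to justify the limit and the Fubini-type exchanges. A secondary issue is the rigorous form of the coarea identity, but since $\phi_1(\lambda)$ is strictly positive and continuous in $\lambda$ and $U$ admits a smooth density on $[0,\pi)$ independent of $\Theta$, this is routine.
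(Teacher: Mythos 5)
Your proposal is correct and follows essentially the same route as the paper: a Palm/Kac--Rice computation using the characterization $\Theta_1(\lambda)\in\pi{\bf Z}+U$ and the Jacobian $\phi_1(\lambda)=\partial\Theta_1(\lambda)/\partial\lambda$, then the representation (\ref{phi}) of $\phi_1$, the Girsanov reweighting of Lemma \ref{Girsanov}, the change of variables between $v$ and $t\in(0,1)$, and the $\epsilon\downarrow 0$ limit. The only difference is cosmetic (you substitute $v\to s$ inside $\phi_1$ before applying Girsanov, while the paper applies Girsanov first and changes variables afterwards), so no further comparison is needed.
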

%
%\end{screen}
%
\begin{proof}
The proof 
is almost parallel as that given in 
\cite{RV}. 
In fact, 
\begin{eqnarray}
{\bf E}
\left[
\sum_{\lambda \in Sine_{\beta}}
G(\lambda, \widetilde{r}(\lambda))
\right]
&=&
{\bf E}\left[
\int_0^{\pi}
\frac {du}{\pi}
\sum_{\lambda : 
\Theta_1(\lambda) \in \pi {\bf Z} + u}
G(\lambda, \widetilde{r}(\lambda))
\right]
\nonumber
\\
&=&
{\bf E}
\left[
\frac {1}{\pi}
\int_{\bf R} du 
\sum_{\lambda \,: \,\Theta_1(\lambda) =   u}
G \left(
\lambda, \widetilde{r} (\lambda) 
\right)
\right]
\nonumber
\\
&=&
\frac {1}{\pi}
\int_{\bf R} d \lambda
{\bf E}
\left[
G(\lambda, \widetilde{r} (\lambda))
\left|
\frac {\partial \Theta_1(\lambda)}{\partial \lambda}
\right|
\right]
\label{conclusion}
\end{eqnarray}
where we set 
$u = \dfrac {\partial \Theta_1(\lambda)}{\partial \lambda} d \lambda$.
By
(\ref{phi}) and Lemma \ref{Girsanov}
\beq
{\bf E}
\left[
G(\lambda, \widetilde{r} (\lambda))
\left|
\frac {\partial \Theta_1(\lambda)}{\partial \lambda}
\right|
\right]
&=&
\lim_{\epsilon \downarrow 0}
\int_{v(\epsilon)}^{v(1)}
\frac {t(v)}{\tau(E_0)}
{\bf E}
\left[
G(\lambda, \widetilde{r} (\lambda))
\exp 
\left(
\widetilde{r}_v - \widetilde{r}_{v(1)}
\right)
\right]
dv
\\
&=&
\lim_{\epsilon \downarrow 0}
\int_{v(\epsilon)}^{v(1)}
\frac {t(v)}{\tau(E_0)}
{\bf E}
\left[
G(\lambda, B_{\cdot} + v - |\cdot - v|)
\right]
dv
\\
&=&
\lim_{ \epsilon \downarrow 0 }
\int_{\epsilon}^1
{\bf E}
\left[
G(\lambda, 
B_{v(\cdot)} + v(t) - |v(\cdot) - v(t)|
)
\right] dt
\quad
%(v  = v(t) = f^2 \log t, dv = \frac {f^2}{t} dt)
\\
&=&
\int_{0}^1
{\bf E}
\left[
G(\lambda, 
B_{v(\cdot)} + v(t) - |v(\cdot) - v(t)|
)
\right] dt
\eeq
Substituting this equation into 
(\ref{conclusion})
yields the conclusion. 
\QED
\end{proof}
To drive 
the intensity measure of 
$\Xi_{E_0}$, 
let 
$G \in C_b ({\bf R} \times {\cal P}(0,1))$.
By 
\beq
\int 
G(\lambda, \mu)
d \Xi_{E_0}(\lambda, \mu)
&=&
\sum_{\lambda \in Sine_{\beta}}
G
\left(
\lambda, 
\frac {
e^{2\widetilde{r}_t(\lambda)} dt
}
{
\int_0^1
e^{2\widetilde{r}_s(\lambda)} 
ds
}
\right)
\eeq
and by 
Lemma \ref{intensity}, we have 
\beq
{\bf E}
\left[
\int G(\lambda, \nu) d \Xi_{E_0} (\lambda, \nu)
\right]
&=&
\frac {1}{\pi}
\int_{\bf R} d \lambda
{\bf E}
\left[
G
\left(
\lambda, 
\frac {
\exp 
\left[
2 (B_{v(\cdot)} + v(U) - 2|v(\cdot)-v(U)|)
\right]
}
{
\int_0^1
\exp 
\left[
2 (B_{v(s)} + v(U) - 2|v(s)-v(U)|)
\right]
ds
}
\right)
\right].
\eeq
Here 
we cancel 
$v(U)$
out and use 
$B_{t} 
\stackrel{d}{=} 
Z_{t-v(U)}$ 
+ (random constant)
yielding 
\beq
&=&
\frac {1}{\pi}
\int_{\bf R} d \lambda
{\bf E}
\left[
G
\left(
\lambda, 
\frac {
\exp 
\left[
2 (Z_{v(\cdot)-v(U)}  - 2|v(\cdot)-v(U)|)
\right]
}
{
\int_0^1
\exp 
\left[
2 (Z_{v(s)-v(U)}  - 2|v(s)-v(U)|)
\right]
ds
}
\right)
\right]
\eeq
where 
$Z$ 
is a two-sided Brownian motion.
Together with 
Lemma \ref{Localcritical}
we complete the proof of Theorem 2.1 for 
$\alpha= 1/2$. 
%

%%%%% INTRODUCTION %%%%%%%%%%%%%%%%%%%%%%%%%%%%%%%%%
\subsection{Sub-critical case}
Let 
$\{ E_j (n) \}$ 
be the positive eigenvalues of 
$H_n$, 
let 
$\psi_{E_j(n)}$
be the normalized eigenfunction corresponding to 
$E_j(n)$
and let 
$x_j (n)$
be a maximal point of 
$| \psi_{E_j(n)}(x) |^2$.
Since 
$\psi_{E_j(n)}$
satisfies the sub-exponential decay estimate : 
$| \psi_{E_j(n)}  (x) |
\le
C 
\exp 
\left[
- D |x - x_j|^{\gamma} 
\right]$, 
$\gamma :=1 - 2 \alpha$, 
maximal points of 
$|\psi_{E_j(n)}|$ 
have a same limit point when they are divided by 
$n$, 
so that we have no ambiguity in choosing 
$x_j (n)$,  
which we call the localization center of 
$E_j (n)$.
Let 
$\xi_n$ 
be the point process of pairs of rescaled eigenvalues and corresponding localization centers, and let 
$\xi$
be a Poisson process. 
\beq
\xi_n
:=
\sum_j 
\delta_{ 
\left(
n( \sqrt{E_j} - \sqrt{E_0}) , 
x_j / n
%\frac {x_j}{n}
\right)
}, 
\quad
\xi
= \sum_j \delta_{(P_j, \widetilde{P}_j)}
\sim Poisson (d \lambda/\pi \times 1_{[0,1]}(x)dx). 
\eeq
Theorem 1.2 for 
$\alpha < 1/2$
will follow from the following Proposition.
%

%
%%%%%
%\begin{screen}
%{\bf 命題}
%
\begin{proposition}
\label{locctr}
For any bounded intervals
$I (\subset {\bf R})$, 
$B = [a,b]  (\subset (0,1))$, 
we have 
\beq
&(1)& \quad
\lim_n
P( \xi_n (I \times B) = 0)
=
P( \xi (I \times B) = 0)
\\
&(2)& \quad
\lim_n
{\bf E}[ \xi_n (I \times B) ] 
=
{\bf E}[ \xi (I \times B)].
\eeq
Furthermore, 
$\xi_n \stackrel{d}{\to} \xi$.
\end{proposition}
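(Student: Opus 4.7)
The plan is to reduce the joint convergence on ${\bf R} \times (0,1)$ to the Poisson convergence of rescaled eigenvalues for Dirichlet subsystems $H_n^B := H|_{[na, nb]}$, which is hypothesis (ii) from the Step 1 discussion in the introduction. Sub-exponential decay of eigenfunctions allows pairing eigenvalues of $H_n$ with localization center in $B$ with eigenvalues of $H_n^B$, and Kallenberg's theorem then upgrades void-probability and intensity convergence on rectangles to full distributional convergence.

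First, fix $B = [a,b] \subset (0,1)$ and let $\xi_n^B := \sum_k \delta_{n(\sqrt{E_k^B(n)} - \sqrt{E_0})}$ denote the rescaled eigenvalue process of $H_n^B$ with rescaling $n$ (not the natural $n(b-a)$). Since a Poisson process of intensity $d\lambda/\pi$ under the dilation $\lambda \mapsto \lambda/(b-a)$ becomes Poisson of intensity $(b-a)\,d\lambda/\pi$, hypothesis (ii) gives $\xi_n^B \stackrel{d}{\to} \mathrm{Poisson}((b-a)\,d\lambda/\pi)$. Next I would establish a near-bijection. Using the sub-exponential decay $|\psi_{E_j(n)}(x)| \le C e^{-D|x - x_j(n)|^\gamma}$ with $\gamma = 1 - 2\alpha$, fix $\epsilon_n := n^{-\eta}$ for a small $\eta \in (0,1)$; any eigenfunction with $x_j(n)/n \in [a + \epsilon_n, b - \epsilon_n]$ is of size $O(e^{-D n^{\gamma(1-\eta)}})$ near $\partial[na,nb]$, so truncating and renormalizing yields a quasi-mode of $H_n^B$ with residual of the same order. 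The spectral theorem pairs it with a nearby eigenvalue of $H_n^B$; the argument runs symmetrically. The boundary strip $x_j(n)/n \in [a, a + \epsilon_n] \cup [b - \epsilon_n, b]$ carries at most $O(\epsilon_n)$ eigenvalues in expectation by the uniform intensity bound for $\xi_{n, E_0}$ from \cite{KN2}. Hence ${\bf E}[\xi_n(I \times B)] = {\bf E}[\xi_n^B(I)] + o(1)$ and $P(\xi_n(I \times B) = 0) = P(\xi_n^B(I) = 0) + o(1)$, and passing to the limit yields (1) and (2).

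For $\xi_n \stackrel{d}{\to} \xi$, the family $\{I \times B : I \text{ bounded interval}, B \subset (0,1) \text{ closed interval}\}$ is a dissecting $\pi$-system generating the Borel $\sigma$-algebra of ${\bf R} \times (0,1)$, and $\xi$ is a simple Poisson process with diffuse intensity. Kallenberg's criterion for convergence to such a process requires exactly the convergence of void probabilities and intensities on such a generating system, both of which are supplied by (1) and (2); the Poisson product structure of $\xi$---in particular independence across disjoint rectangles---is imposed by the form of the limit and does not require any prelimit independence of $\xi_n^{B_1}, \xi_n^{B_2}$. The main obstacle lies in the quasi-mode step: since $\gamma = 1 - 2\alpha \downarrow 0$ as $\alpha \uparrow 1/2$, the residual $e^{-c n^{\gamma(1-\eta)}}$ must beat the typical $O(1/n)$ eigenvalue spacing in the spectral window, forcing careful calibration of $\eta$ together with the uniform intensity bound ${\bf E}[\xi_{n,E_0}(I)] = O(|I|)$ from \cite{KN2}; showing that some $\eta \in (0,1)$ simultaneously controls the residual and renders the boundary strip negligible is the heart of the argument.
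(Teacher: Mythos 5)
Your overall route---reduce to a Dirichlet subsystem on (approximately) $[na,nb]$, pair its eigenvalues with those of $H_n$ via quasi-modes built from the sub-exponentially localized eigenfunctions, and finish with Kallenberg's criterion from void probabilities plus intensities---is the same strategy the paper follows. The genuine gap is your boundary-strip estimate. You claim that the expected number of eigenvalues of $H_n$ in the energy window whose localization centers lie in $[a,a+\epsilon_n]\cup[b-\epsilon_n,b]$ is $O(\epsilon_n)$ ``by the uniform intensity bound for $\xi_{n,E_0}$ from \cite{KN2}''. That bound, ${\bf E}[\xi_{n,E_0}(I)]=O(|I|)$, counts all eigenvalues in the window with no spatial resolution; it yields $O(1)$, not $o(1)$, for the strip. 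A spatially resolved bound of the form ${\bf E}[\xi_n(I\times B')]=O(|I|\,|B'|)$ is essentially part of what the proposition asserts, so invoking it here is circular unless you prove it separately---e.g.\ by restricting to the strip (a box of length of order $n^{1-\eta}$, hence \emph{not} covered by your hypothesis (ii), which concerns subsystems of volume comparable to $n$) and bounding its eigenvalue count in the window by the increment of the Pr\"ufer angle over that strip, whose expectation is $O(\epsilon_n)$. The same missing estimate is needed on the other side of your ``near-bijection'', for eigenfunctions of $H_n^B$ concentrated near $\partial[na,nb]$, which may have no counterpart of $H_n$ with center in $B$.

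The paper avoids this issue entirely by a two-sided sandwich: it introduces $C=[an+n^{\delta},bn-n^{\delta}]$ and $D=[an-n^{\delta},bn+n^{\delta}]$ and proves $\xi_n^C(I'\times[0,1])\le\xi_n(I\times B)\le\xi_n^D(I''\times[0,1])$ with $I',I''$ differing from $I$ by $O(e^{-c\,n^{\delta\gamma}})$; since the rescaled eigenvalue processes of both $H_C$ and $H_D$ converge to the same Poisson$(|B|\,d\lambda/\pi)$ limit (by rerunning the \cite{KN2} argument with the $\Theta$-process started at $an\pm n^{\delta}$), the boundary contribution is the difference of two quantities with identical limits and never needs to be estimated on its own. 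If you keep your single-box formulation you must supply the strip estimate yourself; otherwise switch to the sandwich. Two smaller points: a quasi-mode only guarantees a nearby eigenvalue, so to obtain a counting inequality you also need (near-)orthogonality of the quasi-modes to prevent several eigenvalues from being matched to the same one; and since the residual $e^{-c\,n^{\gamma(1-\eta)}}$ is super-polynomially small, it beats the $O(1/n)$ spacing for every $\eta\in(0,1)$, so the calibration of $\eta$ you single out as the heart of the argument is not the delicate point---the spatially resolved intensity bound is.
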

%
%\end{screen}\\
%%%
%
This 
result was expected to hold true in 
\cite{KN2}. 
For proof, 
we take 
$0 < \delta < 1$
and let 
$C$
(resp. $D$) 
be an interval by eliminating
(resp. adding) a small interval of width 
$n^{\delta}$ 
from 
(resp. to)
$nB :=n[a,b]$. 
\beq
C := 
[an + n^{\delta}, bn - n^{\delta}], 
\quad
D :=
[an - n^{\delta}, bn + n^{\delta}]. 
\eeq
Let 
$H_C := H_n |_C$, 
$H_D := H_n |_D$
with Dirichlet boundary condition, 
and let 
$\xi_n^C$, 
$\xi_n^D$
be the point processes such that 
$E_j(n)$
in the definition of 
$\xi_n$ 
is replaced by the eigenvalues 
$E_j^C(n)$, $E_j^D (n)$
of 
$H_C$, 
$H_D$
respectively.
By a 
localization argument, for an bounded interval 
$I (\subset {\bf R})$, 
we can find intervals 
$I', I''$ 
such that 
$I' = I - {\cal O}
\left(
\exp \left[ - (const.) n^{\delta \gamma} \right]
\right)$, 
$I'' = I + {\cal O}
\left(
\exp \left[ - (const.) n^{\delta \gamma} \right]
\right)$
with 
\begin{equation}
\xi_n^C
(I' \times [0,1])
\le
\xi_n (I \times B)
\le
\xi_n^D
(I'' \times [0,1]).
%\quad\cdots (*)
\label{localizationestimate}
\end{equation}
In fact, 
as is discussed in \cite{N1, GK} for instance, 
for each eigenvalues of 
$H_C$ 
in 
$I$, 
by smoothing argument near the boundary, 
the corrresponding eigenfunction becomes an approximate eigenfunction of 
$H_n$
so that 
$H_n$
has eigenvalues in 
$I + 
{\cal O}
\left(
\exp \left[ - (const.) n^{\delta \gamma} \right]
\right)$
with those localization centers in 
$B$.
Moereover, 
for each eigenvalues of 
$H_n$ 
in 
$I$ 
localized in 
$B$, 
by cutting off argument we can construct approximate eigenfunctions of 
$H^D$
with eigenvalues in 
$I + 
{\cal O}
\left(
\exp \left[ - (const.) n^{\delta \gamma} \right]
\right)$. 
On the other hand, 
for the eigenvalue process of 
$H_C$, $H_D$, 
we have the Poisson statistics as for 
$H_n$ 
proved in 
\cite{KN2}.
That is, 
the point processes 
$\eta_n^C, \eta_n^D$ 
whose atoms are composed of the rescaled eigenvalues of 
$H_C, H_D$
respectively converge to Poisson 
$\left( |B| d \lambda / \pi \right)$.
In fact, 
the key to the proof for 
$H_n$ 
is that the jump point of the processes 
$\left\lfloor
\Theta_{t}(\lambda)^{(n)}(c) / \pi
\right\rfloor$ 
and 
$\left\lfloor
\Theta_{t}(\lambda)^{(n)}(c') - \Theta_{t}(\lambda)^{(n)}(c) / \pi
\right\rfloor$
converge to Poisson processes and they are asymptotically independent(Proposition 5.7, Remark 5.1 and Lemma 5.11 in \cite{KN2}).
And 
we can show the same statement for the processes 
$\Theta^{(\sharp, n)}_{t}(\lambda)$, 
$\sharp = C, D$ 
where the starting time 
$0$
in 
$\Theta_{t}(\lambda)^{(n)}$
is replaced by 
$an \pm n^{\delta}$
which satisfy the same 
SDE
as for 
$\Theta^{(n)}_{t}(\lambda)$. 
Then 
we can show

%%%%%
%\begin{screen}
%[l]
%{\bf 補題}
%
\begin{lemma}
For 
$\sharp = C, D$, 
\beq
&(1)& \quad
\lim_n
P( \xi_n^{\sharp} (I \times [0,1]) = 0)
=
P( \xi (I \times B) = 0)
\\
&(2)& \quad
\lim_n
{\bf E}[ \xi^{\sharp}_n (I \times [0,1]) ]
=
\frac {|I| \cdot |B|}{\pi} 
=
{\bf E}[ \xi (I \times B)]. 
\eeq
\end{lemma}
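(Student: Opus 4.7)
The plan is to reduce both assertions to the Poisson limit for the pure eigenvalue counting processes $\eta_n^{C}, \eta_n^{D}$ indicated in the paragraph preceding the statement, and then to note that the localization-center marginal contributes nothing new. The key reduction is the trivial observation that, since $H^{\sharp}$ is the Dirichlet restriction of $H_n$ to the interval $\sharp$, every eigenfunction of $H^{\sharp}$ vanishes outside $\sharp$, and hence every localization center $x_j^{\sharp}(n)$ lies in $\sharp$. Because $\sharp/n \subset [0,1]$ for all large $n$ (using $B \subset (0,1)$ and $0 < \delta < 1$), one has the almost-sure identity
\[
\xi_n^{\sharp}(I\times [0,1]) \;=\; \eta_n^{\sharp}(I),
\]
so only the eigenvalue marginal matters.

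For (1), the discussion preceding the lemma asserts that the SDE argument of \cite{KN2}, built on the asymptotic independence of the jump processes of $\lfloor \Theta_t^{(n)}(c)/\pi\rfloor$ and its two-parameter analogue, goes through verbatim for the shifted processes $\Theta_t^{(\sharp,n)}(\lambda)$ started at $an\pm n^{\delta}$. This yields the weak convergence $\eta_n^{\sharp} \stackrel{d}{\to} \mathrm{Poisson}(|B|\, d\lambda/\pi)$, since $|\sharp|/n \to |B|$. Void probabilities are continuous under vague convergence on bounded Borel sets for simple point processes, so
\[
P\bigl(\eta_n^{\sharp}(I) = 0\bigr) \;\longrightarrow\; \exp\bigl(-|I|\,|B|/\pi\bigr) \;=\; P\bigl(\xi(I\times B) = 0\bigr),
\]
which is precisely (1).

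For (2), weak convergence to a Poisson limit does not automatically yield convergence of first moments, so a uniform-integrability input is needed. I would obtain it from the Sturm-oscillation bound
\[
\eta_n^{\sharp}(I) \;\leq\; 1 + \pi^{-1}\bigl|\Theta_1^{(\sharp,n)}(\lambda_+) - \Theta_1^{(\sharp,n)}(\lambda_-)\bigr|
\]
for $I = [\lambda_-,\lambda_+]$, combined with the uniform $L^{2}$ control on the Pr\"ufer-angle increment supplied in \cite{KN2}; this promotes the weak limit into convergence of expectations and pins the limit at $|I|\,|B|/\pi$, which agrees with the intensity of $\xi$ on $I\times B$.

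The main obstacle I expect is precisely the transfer of the \cite{KN2} SDE machinery from initial time $0$ to the shifted starting time $an\pm n^{\delta}$: the drift and diffusion coefficients of the limiting SDE for $\Theta_t(\lambda)$ carry factors of the form $1/t$, so one must verify that, even with the shifted clock, the same limiting Poisson intensity $|B|/\pi$ is produced and that the asymptotic independence of the two counting processes (Proposition~5.7, Remark~5.1 and Lemma~5.11 in \cite{KN2}) is preserved. The ratio $n^{\delta}/n \to 0$ is what makes the shift harmless in the scaling limit, but this must be checked at each step rather than quoted.
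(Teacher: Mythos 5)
Your proposal is correct and follows essentially the same route as the paper: identify $\xi_n^{\sharp}(I\times[0,1])$ with the pure eigenvalue count $\eta_n^{\sharp}(I)$ (since Dirichlet eigenfunctions of $H_{\sharp}$ have their centers in $\sharp$), and invoke the transfer of the \cite{KN2} Pr\"ufer-angle/SDE machinery to the processes started at $an\pm n^{\delta}$, which the paper likewise asserts with the observation that $n^{\delta}/n\to 0$ leaves the limiting SDE and the asymptotic independence unchanged. Your extra step for (2) --- uniform integrability via the Sturm oscillation bound and moment control on the angle increment --- is exactly the kind of estimate the paper uses elsewhere (Appendix, sub-critical case) and fills a detail the paper leaves implicit, so it is a faithful completion rather than a different method.
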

%
%\end{screen}\\
%
Now, letting 
$n \to \infty$ 
in 
\beq
&&
P(\xi_n^D (I'' \times [0,1]) = 0)
\le
P (\xi_n (I \times B) = 0)
\le
P( \xi_n^C (I' \times [0,1]) = 0), 
\eeq
we have 
\beq
\lim_n
P(\xi_n (I \times B) = 0)
=
P( \xi (I \times B) = 0)
\eeq
and similarly we have 
\begin{equation}
\lim_n
{\bf E}[ \xi_n (I \times B) ]
=
{\bf E}[ \xi (I \times B)] 
=
\frac {|I| \cdot |B|}{\pi}
%\quad\cdots (\sharp)
\label{Poissonintensity}
\end{equation}
yielding Proposition \ref{locctr}(1), (2).
Therefore by 
\cite{K} Theorem 4.7, 
$\xi_n$
converges to a Poisson process whose intensity measure is equal to 
$d \lambda/\pi \times 1_{[0,1]}(x)dx$.
\QED\\
{\bf Remark 1}\\
It is sufficient to show 
$\limsup
{\bf E}[ \xi_n I \times B ]
\le
{\bf E}[ \xi  \times B]$
for the proof of Proposition \ref{locctr} but we will need equality later for the proof of Theorem 1.1. \\
{\bf Remark 2}\\
The argument of 
proof of Proposition \ref{locctr} is almost model-independent.
If 
(i)
eigenfunctions are exponentially localized, and 
(ii)
if any subsystem of size 
${\cal O}(n)$ 
we have Poisson statistics for the eigenvalue process, 
then we have the Poisson convergence for the pairs of eigenvalues and localization centers. 
\\

Theorem 1.2 for 
$\alpha < 1/2$
follows easily from Proposition \ref{locctr}. 
\\
{\it Proof of Theorem 1.2 for 
$\alpha <1/2$}\\
As the other cases, we show 
%
%\beq
$
\int F(\lambda, \mu) d \Xi^{(n)}(\lambda,\mu)
\stackrel{d}{\to}
\int F(\lambda, \mu) d \Xi(\lambda,\mu)
$
%\eeq
%
for 
$F \in C_c ({\bf R} \times {\cal P}(0,1))$.
By 
Proposition \ref{locctr}, we can assume that these atoms 
$\left(
n( \sqrt{E_j} - \sqrt{E_0}) , 
x_j / n
%\frac {x_j}{n}
\right)$
of 
$\xi_n$
converges in distribution to those 
$(P_j, \widetilde{P}_j)$
of 
$\xi$.
Then by Lemma \ref{vague} below, 
we have 
\beq
\mu_{E_j(n)}^{(n)}
\stackrel{d}{\to}
\delta_{ \widetilde{P}_j }
%\quad\cdots (\sharp')
\eeq
so that for 
$F \in C_c ({\bf R} \times {\cal P}(0,1))$, 
\beq
\sum_j
F 
\left(
n \left(
\sqrt{E_j(n)} - \sqrt{E_0}
\right), 
\mu_{E_j(n)}^{(n)}
\right)
\stackrel{d}{\to}
\sum_j
F 
\left(
P_j, 
\delta_{\widetilde{P}_j}
%\mu_{E_j(L)}^{(L)}
\right).
\eeq
For the intensity measure of 
$\Xi$, 
we use the fact that 
$\{ (P_j, \widetilde{P}_j ) \}_j
\sim
Poisson (d \lambda / \pi \times 1_{[0,1]}(x)dx)$
and yield 
\beq
{\bf E}
\left[
\int F(\lambda, \mu) d \Xi_E
\right]
&=&
{\bf E}
\left[
\sum_j
F(P_j, \delta_{\widetilde{P}_j}
)
\right]
=
\int d \lambda 
{\bf E}[ F(\lambda, \delta_{U} ], 
\quad
U \sim unif\, [0,1].
\eeq
\QED\\
%

%
%\begin{screen}
%
\begin{lemma}
\label{vague}
Suppose
$\left(
n( \sqrt{E_j} - \sqrt{E_0}), x_j / n
%\frac {x_j}{n}
\right)
\stackrel{d}{\to}
(P_j, \widetilde{P}_j)$.
Then we have 
\beq
\mu_{E_j(n)}^{(n)}
\stackrel{d}{\to}
\delta_{ \widetilde{P}_j }. 
\quad
%a.s.
%\quad\cdots (\sharp)
\eeq
\end{lemma}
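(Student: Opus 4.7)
The goal is to show that, once the joint convergence $(n(\sqrt{E_j(n)}-\sqrt{E_0}), x_j(n)/n) \stackrel{d}{\to} (P_j,\widetilde{P}_j)$ is available, the random probability measure $\mu^{(n)}_{E_j(n)}$ converges in distribution to $\delta_{\widetilde{P}_j}$. Using Skorohod's representation I pass to a common probability space on which $x_j(n)/n \to \widetilde{P}_j$ almost surely, reducing the statement to the deterministic claim that $\mu^{(n)}_{E_j(n)}$ concentrates near $x_j(n)/n$.

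The next step is to rewrite the density in the Pr\"ufer radial variable. Using $|\psi|^2 + E^{-1}|\psi'|^2 = R^2$ from (\ref{Prufer}), the density of $\mu^{(n)}_{E_j(n)}$ equals $C n R_{nt}(\sqrt{E_j(n)})^2$, which is exactly the object treated in Lemma \ref{locunif}; after the change of variable $u = nt$,
\beq
\mu^{(n)}_{E_j(n)}([a,b])
&=&
\frac{\int_{na}^{nb} R_u^2\,du}{\int_0^n R_u^2\,du}.
\eeq
The denominator is bounded below by $1$: since $\psi_{E_j(n)}$ is $L^2$-normalized, $\int_0^n R_u^2\,du = 1 + E^{-1}\int_0^n (\psi')^2\,du \ge 1$. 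For the numerator I invoke the sub-exponential localization estimate from \cite{KN2}, which together with the standard pairing of the eigenfunction with its derivative yields $R_u(\sqrt{E_j(n)})^2 \le C\exp[-2D|u - x_j(n)|^\gamma]$ with $\gamma = 1 - 2\alpha > 0$. Consequently, for any fixed $\epsilon > 0$,
\beq
\int_{|u - x_j(n)| > n\epsilon} R_u^2\,du
&\le&
C'\exp[-2D(n\epsilon)^\gamma]
\;\xrightarrow[n\to\infty]{}\; 0,
\eeq
so that $\mu^{(n)}_{E_j(n)}\bigl(B(x_j(n)/n,\epsilon)^c\bigr) \to 0$ almost surely.

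Combining this concentration at $x_j(n)/n$ with the a.s.\ convergence $x_j(n)/n \to \widetilde{P}_j$ gives the vague convergence $\mu^{(n)}_{E_j(n)} \to \delta_{\widetilde{P}_j}$, which is the claim. The main technical point is to make sure the sub-exponential decay carried by $|\psi|$ in \cite{KN2} actually propagates to the full Pr\"ufer radius $R = \sqrt{|\psi|^2 + E^{-1}|\psi'|^2}$; in one dimension this is standard via a Combes--Thomas-type tilt of the transfer matrix, and once it is granted, the remaining concentration step is a routine tail estimate made quantitative by $\gamma > 0$, so that $\exp[-(n\epsilon)^\gamma] \to 0$ despite the sub-linear exponent.
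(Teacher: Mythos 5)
Your argument is correct and takes essentially the same route as the paper: Skorohod representation, the sub-exponential localization bound (uniform in $n,j$ because the relevant eigenvalues lie in an ${\cal O}(n^{-1})$-neighborhood of $E_0$), and a tail estimate showing the mass concentrates near $x_j/n$, which converges a.s.\ to $\widetilde{P}_j$. The differences are cosmetic: you work with the Pr\"ufer radius $R^2$ (so you must, and do, note that the decay propagates to $\psi'$, a point the paper leaves implicit) and test on intervals rather than $C_c$ functions, and your tail integral really carries an extra factor of order $(n\epsilon)^{1-\gamma}$, which does not affect the conclusion.
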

%
%\end{screen}
%
\begin{proof}
Let 
$\xi = \sum_j \delta_{(P_j, \widetilde{P}_j)}$, 
$(P_j, \widetilde{P}_j) \in {\bf R} \times [0,1]$
be a Poisson process in Proposition \ref{locctr}.
Then by Skorohod's theorem 
we may assume 
\begin{equation}
\left(
n \left(
\sqrt{E_j(n)} - \sqrt{E_0}
\right), 
\frac {
x_{\psi_j}^{(n)}
}
{n}
\right) 
\;
\stackrel{a.s.}{\to}
\;
\left(
P_j, \widetilde{P}_j
\right). 
%\quad\cdots (*)
\label{Skorohod}
\end{equation}
For simplicity, 
we write 
$x_j := x_{\psi_j}^{(n)}$.
We 
represent the normalized eigenfunction 
$\psi_j^{(n)}$
around its localization center 
$x_j$ 
as 
%
%\beq
$
\psi_j^{(n)}(x)
=:
g_j^{(n)}(x - x_j)
$.
%\eeq
%
For 
$f \in C_c({\bf R})$, 
\beq
\int_{\bf R}
f(x) | \psi_j^{(n)}(nx) |^2 ndx
&=&
\int_{\bf R}
f
\left(
\frac {x_j}{n} + \frac yn
\right) | g_j^{(n)}(y) |^2 dy
\quad
(y = nx - x_j).
\eeq
Here 
we use the following estimate : for any 
$\epsilon > 0$
we can find 
$R_{\epsilon} > 0$
such that
\beq
\int_{ |y| \ge R_{\epsilon} }
| g_j^{(n)} (y) |^2 dy
<
\epsilon.
\eeq
In fact, 
since we only consider eigenvalues 
$E_j (n)$ 
in the 
${\cal O}(n^{-1})$-
neighborhood of 
$E_0$, 
we can bound 
$|g_j^{(n)}(x)| 
\le 
C(E_0)
\exp 
\left[
- D(E_0) |x|^{1 - 2 \alpha}
\right]$
so that 
$R_{\epsilon}$
can be taken uniformly w.r.t. 
$n, j$. 
We then have 
\beq
\left|
\int_{\bf R}
f(x) | \psi_j^{(n)}(nx) |^2 ndx
-
f( \widetilde{P}_j )
\right|
&=&
\left|
\int_{\bf R}
\left\{
f\left(
\frac {x_j}{n} + \frac yn
\right) 
- f( \widetilde{P}_j )
\right\}
| g_j^{(n)}(y) |^2 dy
\right|
\\
& \le &
\sup_{ |y| \le R_{\epsilon} }
\left|
f\left(
\frac {x_j}{n} + \frac yn
\right) 
- f( \widetilde{P}_j )
\right|
+
2 \| f \|_{\infty}
\epsilon.
\eeq
Since 
$f$
is uniformly continuous, 
\beq
\sup_{ |y| \le R_{\epsilon} }
\left|
f\left(
\frac {x_j}{n} + \frac yn
\right) 
- f( \widetilde{P}_j )
\right|
& \le &
\sup_{ |y| \le R_{\epsilon} }
\left|
f\left(
\frac {x_j}{n} + \frac yn
\right) 
- f
\left( 
\frac {x_j}{n} 
\right)
\right|
+
\sup_{ |y| \le R_{\epsilon} }
\left|
f\left(
\frac {x_j}{n}
\right) 
- f( \widetilde{P}_j )
\right|
\\
& \stackrel{n \to \infty}{\to} & 0.
\eeq
\QED
\end{proof}
%

%
%%%%%%%%%%%%%%%%%%%%%%%%%%%%%%%%%%%%%%%%%%%%%%%%%%%%%%%%%%%%
\section{Global version}
The result 
for the global version (the statement in Theorem 1.1) follows from that for local version (statement in Theorem 1.2) by a general argument.
Following 
\cite{RV}, 
we introduce
\beq
g_1 (x)
&:=&
(1 - |x|) 1(|x| \le 1), 
\\
G_n(E)
&:=&
\sum_{E_j(n) \in J}
g_1 
\left(
n 
\left(
\sqrt{E_j(n)} - \sqrt{E}
\right)
+ \theta
\right)
\cdot
g_2 \left(
E_j(n), \mu^{(n)}_{E_j(n)}
\right)
\eeq
where 
$g_2 \in C_b ({\bf R} \times {\cal P}(0,1))$.
$\theta \sim unif [0, \pi)$ 
for 
$\alpha > \frac 12$, 
and 
$\theta = 0$ 
otherwise. 
For the global version, 
we need to consider 
$\sum_{E_j \in J}
g_2 \left(
E_j(n), \mu^{(n)}_{E_j(n)}
\right)$. 
The motivation 
to consider 
$G_n (E)$ 
is to localize this quantity around the reference energy 
$E$ 
by multiplying 
$g_1$.
We compute 
$\int_J dN(E)/N(J) \int d {\bf P}G_n (E)$
in two ways by exchanging the order of integrals, and then equate them by the Fubini theorem, which yields the conclusion. 
The idea 
behind this argument is : 

(1)
Integrate w.r.t. 
$dN(E)/N(J)$
and then take expectation : 
we first 
integrate w.r.t. the reference energy around each 
$E_j$'s 
of width of order 
${\cal O}(n^{-1})$ 
which results in to get 
$n^{-1}$
factor, yielding the quantity we want to compute. 
\\
(2)
Take expectation first and then integrate w.r.t. 
$dN(E)/N(J)$ : 
we first fix the reference energy, and take expectation first.
Since 
we have 
$g_1$ 
factor, we have the intensity measure of the local version.
Then 
integrating w.r.t. the reference energy gives us the answer.\\
~Therefore, 
a general principle is that,  
the answer to our global problem is equal to the integral w.r.t. the reference energy of the intensity measure of the local problem.\\
Along the idea 
explained above, we compute 
$\int_J d N(E)/N(J){\bf E}[ G_L (E) ]$, 
$J=[a,b]$
in two ways. 
We first note that 
%
%\beq
$
| n(\sqrt{E_j} - \sqrt{E}) + \theta |
\le 1
$
if and only if 
%\;\Longleftrightarrow\;
$
| \sqrt{E_j} - \sqrt{E} |
\le
(\pi+1)/n
%\frac {\pi+1}{n}
$.
Since
$\sqrt{E_j} \in (\sqrt{a}, \sqrt{b})$, 
$a>0$, 
we have 
\beq
\int_J
g_1 (n 
\left(
\sqrt{E_j} - \sqrt{E}
\right) + \theta)
d N(E)
=
\frac {1}{n \pi}.
\eeq
(1)
We first integrate w.r.t. 
$dN(E)$
%$\int_J dN(E)$
and then take expectation : 
\begin{eqnarray}
&&
{\bf E}\left[ 
\int_J \dfrac {dN(E)}{N(J)} G_n (E) 
\right]
\nonumber
\\
&=&
{\bf E}
\left[
\frac {1}{N(J)}
\frac {1}{\pi n}
\sum_{E_j(n) \in J}
g_2 \left(
E_j(n), \mu^{(n)}_{E_j(n)}
\right)
\right]
\nonumber
\\
&=&
{\bf E}
\left[
\frac {1}{
N (H_n, J)
%(1 + o(1)) 
}
\cdot
\frac {1}{\pi}
\cdot
\sum_{E_j(n) \in J}
g_2 \left(
E_j(n), \mu^{(n)}_{E_j(n)}
\right)
\right]
+ o(1).
\label{first}
%\quad\cdots (1)
\end{eqnarray}
where we set 
$N (H_n, J)
:=
\sharp \{
\mbox{ eigenvalues of $H_n$ in $J$ } \}$.
The 
last equality follows from 
\begin{equation}
{\bf E}
\left[
\left(
\frac {1}{N (H_n, J) }
-
\frac {1}{N(J) n}
\right)
\frac {1}{\pi}
\sum_{E_j \in J}
g_2 (E_j, \mu^{(n)}_{E_j(n)})
\right]
= o(1). 
%\quad\cdots (*)
\label{error}
\end{equation}
To show 
(\ref{error}), 
we note that the quantity in the expectation is estimated as  
\beq
&&
\left|
\frac {1}{N (H_n, J)}
-
\frac {1}{N(J) n}
\right|
\frac {1}{\pi}
\sum_{E_j \in J}
| g_2 (E_j, \mu^{(n)}_{E_j(n)}) |
\\
&\le&
\left|
\frac {1}{N (H_n, J)}
-
\frac {1}{N(J) n}
\right|
\frac {1}{\pi}
\| g_2 \|_{\infty}
N (H_n, J) 
\\
&=&
\frac {
\bigl|
N(J) - 
n^{-1}N (H_n, J) 
\bigr|
}
{
N(J) 
}
\frac {1}{\pi}
\| g_2 \|_{\infty}
\eeq
which converges to 
$0$
a.s.
by the definition of 
$N(J)$.
On the other hand, since we have 
\beq
N (H_n, J)
\le
\frac {
\theta_n (\sqrt{b}) - \theta_n (\sqrt{a})
}
{\pi}
\eeq
and since, by examining the integral equation 
(\ref{Prufer1}) 
satisfied by 
$\widetilde{\theta}_t (\kappa)$, 
we have 
\beq
\theta_n (\sqrt{b})- \theta_n (\sqrt{a})
\le
C n
\eeq
for some deterministic constant 
$C$, 
$(\ref{error})$
follows from the bounded convergence theorem.
\\

(2)
We first 
expectation and then integrate w.r.t. 
$dN(E)$ : 
\beq
&&
\frac {1}{N(J)}
\int_J dN(E)
{\bf E}[ G_n (E) ]
\\
&=&
\frac {1}{N(J)}
\int_J dN(E)
{\bf E}
\left[
\sum_{E_j \in J}
g_1 
\left(
n
\left(
\sqrt{E_j} - \sqrt{E}
\right) + \theta
\right)
g_2 (E_j, \mu_{E_j}^{(n)})
\right]
\\
&=&
\frac {1}{N(J)}
\int_J dN(E)
{\bf E}
\left[
\sum_{E_j \in J}
g_1 
\left(
n
\left(
\sqrt{E_j} - \sqrt{E}
\right) + \theta
\right)
\left(
g_2 (E, \mu_{E_j}^{(n)}) + o(1)
\right)
\right]
\\
&=&
\frac {1}{N(J)}
\int_J dN(E)
{\bf E}
\left[
\sum_{E_j \in J}
g_1 
\left(
n
\left(
\sqrt{E_j} - \sqrt{E}
\right) + \theta
\right)
\left(
g_2 (E, \mu_{E_j}^{(n)})
\right)
\right]
 + o(1)
\\
&=&
\frac {1}{N(J)}
\int_J dN(E)
{\bf E}
\left[
\int
g_1(\lambda+ \theta) g_2(E, \mu)
d \Xi^{(n)}_E
%\rho_{L, E}
(\lambda, \mu)
\right]
+ o(1). 
\eeq
where, in the second equality, we used that fact that 
$E_j (n) \to E$ 
for 
$j$'s 
such that 
$n
\left(
\sqrt{E_j} - \sqrt{E}
\right) + \theta
\in
\mbox{ supp } g_1$.
For the third equality, 
we used the fact that
\beq
\sum_{E_j \in J}
g_1 
\left(
n
\left(
\sqrt{E_j} - \sqrt{E}
\right) + \theta
\right)
g_2 (E_j, \mu_{E_j}^{(n)}), 
\quad
\sum_{E_j \in J}
g_1 
\left(
n
\left(
\sqrt{E_j} - \sqrt{E}
\right) + \theta
\right)
g_2 (E, \mu_{E_j}^{(n)})
\eeq
are both uniformly integrable w.r.t. 
$d N(E) \times {\bf P}$ 
by the argument in Section 2 in Appendix.\\
Since 
$\Xi^{(n)}_E \stackrel{d}{\to} \Xi_E$
by Theorem 1.2, and since 
$\{ G_n (E) \}_n$
is uniformly integrable w.r.t. 
$d N \times {\bf P}$
to be shown in Section 2 in Appendix, we have 
\begin{eqnarray}
&&
%\lim_{n \to \infty}
\int_J \dfrac {dN(E)}{N(J)} {\bf E}[ G_n (E) ]
\nonumber
\\
& \stackrel{n \to \infty}{\to} &
\int_J \dfrac {dN(E)}{N(J)} {\bf E}
\left[
\int g_1 (\lambda+ \theta) 
g_2 (E, \mu)
d \Xi_E(\lambda, \mu)
\right]
\nonumber
\\
&=&
\int_J \dfrac {dN(E)}{N(J)}
\frac {1}{\pi}
\left\{
\begin{array}{ll}
\int d \lambda 
g_1 (\lambda)
{\bf E}
\left[
g_2
\left(
E, 
1_{[0,1]} (t) dt 
\right)
\right]
& 
(\alpha > 1/2) \\
\int_J d \lambda
g_1 (\lambda)
{\bf E}
\left[
g_2
\left(
E, 
\frac {
\exp
\Bigl(
2 {\cal Z}_{
\tau(E) \log \frac tU
}
-
2 \tau(E)
\log 
\left| \frac tU \right|
\Bigr)
dt
}
{
\int_0^1
\exp
\Bigl(
2 {\cal Z}_{
\tau(E) \log \frac sU
}
-
2 \tau(E)
\log 
\left| \frac sU \right|
\Bigr)
ds
}
\right)
\right]
&
(\alpha = 1/2)
\\
\int d \lambda
g_1 (\lambda)
{\bf E}
\left[
g_2
\left(
E, 
\delta_{ U }
\right)
\right]
& 
(\alpha > 1/2) 
\end{array}
\right.
\nonumber
\\
&=&
\int_J \dfrac {dN(E)}{N(J)}
\frac {1}{\pi}
\left\{
\begin{array}{ll}
{\bf E}
\left[
g_2
\left(
E, 
1_{[0,1]} (t) dt 
\right)
\right]
& 
(\alpha > 1/2) \\
{\bf E}
\left[
g_2
\left(
E, 
\frac {
\exp
\Bigl(
2 {\cal Z}_{
\tau(E) \log \frac tU
}
-
2 \tau(E)
\log 
\left| \frac tU \right|
\Bigr)
dt
}
{
\int_0^1
\exp
\Bigl(
2 {\cal Z}_{
\tau(E) \log \frac sU
}
-
2 \tau(E)
\log 
\left| \frac sU \right|
\Bigr)
ds
}
\right)
\right]
&
(\alpha = 1/2)
\\
{\bf E}
\left[
g_2
\left(
E, 
\delta_{ U }
\right)
\right]
& 
(\alpha > 1/2) 
\end{array}
\right.
\label{second}
%\quad\cdots (2)
\end{eqnarray}
(\ref{first}), (\ref{second})
%(1), (2)
yield the statement of Theorem 1.1.
\QED
%

%
%%%%%%%%%%%%%%%%%%%%%%%%%%%%%%%%%%%%%%%%%%%%%%%%%%%%%%%%%%%%
\section{Appendix}
%
%
%%%%%%%%%%%%%%%%%%%%%%%%%%%%%%%%%%%%
\subsection{Statement for DC model}
We  
consider the continuum 
1-dimensional operator with decaying coupling constant,
that is, 
\beq
H_{\alpha, n}
=
- \frac {d^2}{d t^2} + n^{- \alpha} F(X_t), 
\quad
on 
\;
L^2 (0, n).
\eeq
with Dirichlet boundary condition.
Then 
we have the corrresponding results for theorems 1.1, 1.2 which we state here. 
The conclusion 
for the critical case is essentially the same as that in 
Rifkind-Virag \cite{RV}.
Since 
the proof is similar for those of theorems 1.1, 1.2, we omit details. 
\begin{theorem}
\beq
&&
\left(
E_J^{(n)}, \mu^{(n)}_{E_J^{(n)}}
\right)
\\
&&
\stackrel{d}{\to}
\left\{
\begin{array}{ll}
\left(
E_J, 1_{[0,1]}(t) dt 
\right) & (\alpha > 1/2) \\
\left(
E_J, 
\frac {
\exp
\Bigl(
2 {\cal Z}_{\tau(E_J)(t - U)} - 2  
\left|
\tau(E_J)(t - U) 
\right|
\Bigr)
dt
}
{
\int_0^1
\exp
\Bigl(
2 {\cal Z}_{\tau(E_J)(s - U)} - 2  
\left|
\tau(E_J)(s - U)
\right|
\Bigr)
ds
}
\right)
& (\alpha=1/2) \\
\left(
E_J, \delta_{unif[0,1]}(dt)
\right)
& (\alpha < 1/2) 
\end{array}
\right.
\eeq
\end{theorem}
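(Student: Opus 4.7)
The plan is to mirror the two-step strategy used for Theorems 1.1 and 1.2, adapting only those places where the $t$-independence of the coupling $n^{-\alpha}$ changes the asymptotics. Writing the Pr\"ufer decomposition (\ref{Prufer})--(\ref{Prufer2}) for $H_{\alpha,n}$ with $a(s) \equiv n^{-\alpha}$ and rescaling time $t \mapsto nt$, Lemma 6.2 of \cite{KN1} together with the Ito identity (\ref{Ito}) give
\beq
r_{nt}^{(n)}(\kappa)
=
n^{1-2\alpha}\,\tau(\kappa^2)\,t
\;+\;
\frac{n^{-\alpha}}{2\kappa}\,\mathrm{Im}\!\int_0^{nt} e^{2i\theta_s(\kappa)}\,\nabla g_\kappa(X_s)\,dX_s
\;+\; o(1),
\eeq
uniformly on compacts in $(t,\kappa)$. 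The prefactor $n^{1-2\alpha}$ is the phase-transition parameter, and is the only place where the analysis diverges by regime.

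For $\alpha>1/2$, both deterministic and martingale contributions vanish, so $r^{(n)}_{nt}(\kappa)\to 0$ locally uniformly; the proof of Lemma \ref{locunif} then yields $\mu_E^{(n)}\to 1_{[0,1]}(t)\,dt$, and the eigenvalue part produces a clock process by the argument of Section 2.1. For $\alpha<1/2$ one reuses Proposition \ref{locctr} directly: the key inputs (sub-exponential localization of eigenfunctions and Poisson statistics for any subsystem of size $\mathcal O(n)$) are available for $H_{\alpha,n}$ from \cite{KN2} by the same reasoning, so Lemma \ref{vague} applies and gives the atomic limit $\delta_U$.

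The heart of the matter is the critical case. Here $n^{1-2\alpha}=1$, so no renormalization is needed: Section 2.2.1 simplifies because $a'(s)=0$ makes the boundary terms $\widetilde\delta_t(\kappa)$ in Lemma 6.2 of \cite{KN1} trivially negligible. Repeating the tightness and quadratic-variation estimates of Lemma \ref{Radial}, the limiting SDE for the rescaled radial part becomes
\beq
d\widetilde r_t(c) = \tau(E_0)\,dt + \sqrt{\tau(E_0)}\,dB_t^{c},\qquad t\in(0,1),
\eeq
with constant (rather than $1/t$) coefficients. Setting $v=\tau(E_0)\,t$ reduces this to $d\widetilde r = dv + dB_v$ on $v\in[0,\tau(E_0)]$, which is exactly the equation studied in \cite{RV}. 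Girsanov's lemma in the form of Lemma \ref{Girsanov} then tilts $\widetilde r$ into a Brownian motion with tent drift $f^v(s)=v-|s-v|$, and the intensity-measure computation of Lemma \ref{intensity} reproduces the density
\beq
\frac{\exp\bigl(2\mathcal Z_{\tau(E_J)(t-U)} - 2|\tau(E_J)(t-U)|\bigr)\,dt}{\int_0^1 \exp\bigl(2\mathcal Z_{\tau(E_J)(s-U)} - 2|\tau(E_J)(s-U)|\bigr)\,ds}
\eeq
claimed in the theorem, with $\mathcal Z$ a two-sided Brownian motion and $U$ uniform on $[0,1]$.

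Finally, the global statement follows from the local one by the model-independent Fubini/averaging argument of Section 3, provided the uniform integrability of $G_n(E)$ with respect to $dN\times\mathbf P$ established in Appendix Section 2 carries over. This is the main technical obstacle: one must check that the estimate on $\theta_n(\sqrt{b})-\theta_n(\sqrt{a})\le Cn$ used to bound the eigenvalue counting function, as well as the a priori bounds entering the uniform-integrability argument, hold with $a(s)$ replaced by $n^{-\alpha}$. Both reduce to the same integral equation (\ref{Prufer1}), so the verification is routine but is the one place where the DC-specific constants must be tracked explicitly.
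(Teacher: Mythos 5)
Your proposal is correct and follows essentially the same route the paper intends: the paper itself omits the details for the DC model, stating only that the argument parallels Theorems 1.1 and 1.2, and your sketch supplies exactly that adaptation — identifying $n^{1-2\alpha}$ as the transition parameter, obtaining the constant-coefficient SDE $d\widetilde r_t = \tau(E_0)\,dt + \sqrt{\tau(E_0)}\,dB_t$ (which matches the local statement in the Appendix), so that the change of variable $v=\tau t$ replaces $\log(t/U)$ by $t-U$, and then reusing the Girsanov/intensity computation, the sub-critical Poisson argument, and the Fubini averaging with the uniform-integrability check redone for $a(s)\equiv n^{-\alpha}$. No gaps beyond the level of detail the paper itself leaves implicit.
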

For the 
Local version, we set 
\beq
\Xi^{(n)}_{E_0}
:=
\sum_j
\delta_{
\Bigl(
n 
\bigl(
\sqrt{E_j(n)} - \sqrt{E_0}
\bigr)+ \theta, 
\,
\mu_{E_j(n)}^{(n)}
\Bigr)
}, 
\quad
%\theta \sim unif [0, \pi)
\eeq
where 
$\theta \sim unif [0, \pi)$
for 
$\alpha \ge \frac 12$, 
$\theta = 0$ 
for 
$\alpha < \frac 12$.
%
%%%%%%
\begin{theorem}
$\Xi^{(n)}_{E_0} \stackrel{d}{\to} \Xi_{E_0}$, 
where
\beq
\Xi_{E_0}
&=&
\left\{
\begin{array}{ll}
\sum_{j \in {\bf Z}}
\delta_{j \pi + \theta}
\otimes
\delta_{ 1_{[0,1]}(t) dt }
& (\alpha > 1/2) \\
\sum_{ \lambda : Sch^* }
\delta_{
\lambda
}
\otimes
\delta
\Bigl(
\frac { 
\exp ( 2 \widetilde{r}_t(\lambda) ) dt
}
{
\int_0^1
\exp ( 2 \widetilde{r}_s(\lambda) ) ds
}
\Bigr)
& (\alpha = 1/2) \\
\sum_{j \in {\bf Z}}
\delta_{P_j}
\otimes
\delta_{\widetilde{P}_j}, 
\quad
&
(\alpha < 1/2) 
\end{array}
\right.
\eeq
where for 
$\alpha > 1/2$, 
$\theta \sim unif [0, \pi)$. 
For
$\alpha = 1/2$, 
\beq
Sch^*
:=
\left\{
\lambda \in {\bf R}
\, \middle| \,
\Psi_1(\lambda) \in 2j \pi + unif [0, 2\pi), 
\;
j \in {\bf Z} 
\right\}, 
\eeq
and
$\{ \Psi_t (\lambda)\}$
is a increasing function valued process given in eq.(1.2) in \cite{N2}.
$\widetilde{r}_t (\lambda)$
is characterized by the solution to the following equation : 
\beq
d 
\widetilde{r}_t (\lambda)
=
\tau(E_0)
dt
+
\sqrt{
\tau(E_0)
}
\,
d B_t^{\lambda}, 
\quad
t > 0
\eeq
where 
$\{ B_t^{\lambda} \}_{\lambda}$
is a family of Brownian motion.
For 
$\alpha < 1/2$, 
$\{ P_j \} : Poisson (d \lambda /\pi)$, 
$\{ \widetilde{P}_j \} : Poisson (1_{[0,1]}(t) dt)$
where 
$Poisson (\mu)$
is the Poisson process with intensity measure 
$\mu$. 
The intensity measure of 
$\Xi_{E_0}$ 
is given by 
\beq
&&
{\bf E} 
\left[
\int
G(\lambda, \nu) 
d \Xi_{E_0}(\lambda, \nu)
\right]
\\
&=&
\frac {1}{\pi}
\left\{
\begin{array}{ll}
\int d \lambda
{\bf E}
\left[
G
\left(
\lambda, 
1_{[0,1]} (t) dt 
\right)
\right]
& 
(\alpha > 1/2) \\
\int d \lambda
{\bf E}
\left[
G
\left(
\lambda, 
\frac {
\exp
\Bigl(
2 {\cal Z}_{
\tau(E_0) (t-U)
}
-
2 
\tau(E_0) \left| t-U \right|
\Bigr)
dt
}
{
\int_0^1
\exp
\Bigl(
2 {\cal Z}_{
\tau(E_0) (s-U)
}
-
2 
\tau(E_0)|s-U |
\Bigr)
ds
}
\right)
\right]
&
(\alpha = 1/2)
\\
\int d \lambda
{\bf E}
\left[
G
\left(
\lambda, 
\delta_{ U }
\right)
\right]
& 
(\alpha > 1/2) 
\end{array}
\right.
\eeq
where 
$U := unif [0,1]$.
\end{theorem}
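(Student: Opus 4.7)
The plan is to follow the three-case strategy used for Theorem 1.2 verbatim, with modifications dictated only by the fact that the DC perturbation $n^{-\alpha}F(X_t)$ has a spatially constant envelope on $[0,n]$ rather than the decaying envelope $a(t)$. As in the preliminary reduction of Section 2.0, it suffices to test $\Xi^{(n)}_{E_0}$ against functions $f(\lambda,\mu)=h_1(\lambda)h_2(\mu)$ with $h_1\in C_c({\bf R})$ and $h_2\in C({\cal P}(0,1))$, after which the argument splits into a convergence statement for the rescaled Pr\"ufer angle $\Theta^{(n)}_1(\lambda)$ (which encodes the eigenvalues via Sturm's oscillation theorem) and one for the normalized radial density $e^{2r^{(n)}_t}\,dt / \int_0^1 e^{2r^{(n)}_s}\,ds$.

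For the supercritical case the argument of Section 2.1 transports almost word for word: when $\alpha>1/2$ the quadratic variation of the martingale $Y_{nt}(\kappa)$ scales like $n^{1-2\alpha}\to 0$, so the radial Pr\"ufer exponent $r^{(n)}_t(\kappa)$ converges almost surely and locally uniformly in $\kappa$. This yields the direct analogue of Lemma~\ref{locunif}, hence $\mu^{(n)}_E\stackrel{v}{\to}1_{[0,1]}(t)\,dt$ uniformly on compact energy intervals, and the subsequence argument (\ref{subsequence}) together with Sturm oscillation identifies the limit eigenvalue process with $\sum_j\delta_{j\pi+\theta}$, $\theta\sim\mathrm{unif}\,[0,\pi)$. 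For the subcritical case I would invoke the model-independent principle stated in Remark~2 after Proposition~\ref{locctr}: provided one verifies for the DC model from \cite{KN2,N2} that (i) eigenfunctions of $H_{\alpha,n}$ are sub-exponentially localized and (ii) the rescaled eigenvalues of every subsystem of length comparable to $n$ converge to a Poisson process, the localization sandwich (\ref{localizationestimate}) together with Lemma~\ref{vague} produces the joint Poisson limit of the eigenvalue and localization-center pairs with intensity $d\lambda/\pi\otimes 1_{[0,1]}\,dx$, which is the stated form.

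The critical case $\alpha=1/2$ is where the genuinely new calculation lives and where I expect the main work. The identity analogous to Lemma~\ref{Radial} is $\int_0^{nt}n^{-1}Fg_\kappa(X_s)\,ds = \langle Fg_\kappa\rangle\,t + \widetilde\delta^{(n)}_t(\kappa)+o(1)$: the constant envelope $n^{-1}$ integrated over $[0,nt]$ is linear in $t$, in place of the $\log t$ produced by $a(s)^2\sim s^{-1}$ in the original model. Subtracting the counterterm $\langle Fg_\kappa\rangle\int_0^n n^{-1}\,ds = \langle Fg_\kappa\rangle$ to define the renormalized $\widetilde r^{(n)}_t$, and redoing the tightness and martingale bound for the quadratic variation of $Y_{nt}(\kappa_c)$ which now behaves like $\int_s^t du$ rather than $\int_s^t du/u$, one obtains the limit SDE $d\widetilde r_t=\tau(E_0)\,dt+\sqrt{\tau(E_0)}\,dB_t^\lambda$ asserted in the theorem, while $\Theta^{(n)}_t(c)$ converges to the $\mathrm{Sch}^*$ process of \cite{N2}. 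The Girsanov intensity computation of Lemma~\ref{intensity} then carries through with the time change $v(t)=\tau(E_0)\,t$ in place of $v(t)=\tau(E_0)\log t$, which is precisely the substitution that converts the critical density involving $\log(t/U)$ in Theorem 1.2 into the one involving $t-U$ stated here. The technical point that needs the most care is the boundary-layer control of the denominator $\int_0^1 e^{2r^{(n)}_s}\,ds$ under the new renormalization; fortunately the renormalization shift $\widetilde A_n$ is now $O(n^{1-2\alpha})=O(1)$ rather than $O(\log n)$, so the same split at a threshold $n^{-\delta}$ with an $O(\sqrt{\log n})$ bound on $r^{(n)}_s$ used in Section 2.2 suffices and is in fact somewhat easier.
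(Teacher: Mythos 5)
Your plan is correct and matches the paper's intent exactly: the paper proves this theorem only by remarking that the argument is the same as for Theorems 1.1 and 1.2, and your adaptations (linear drift $\langle Fg_\kappa\rangle t$ in place of $\log t$, the essentially trivial $O(1)$ renormalization, the $Sch^*$ limit for the phase, the time change $v(t)=\tau(E_0)t$ in the Girsanov/intensity computation turning $\log(t/U)$ into $t-U$, vanishing radial part in the supercritical case, and the model-independent Poisson argument of Proposition 2.8/Remark 2 in the subcritical case) are precisely the modifications the omitted proof requires. No gaps beyond the level of detail the paper itself leaves implicit.
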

%%%%%%%%%%%%%%%%
%

%
%%%%%%%%%%%%%%%%%%%%%%%%%%%%%%%%%%%%
\subsection{Uniform integrability}
In this subsection 
we show the uniform integrability of 
$G_n (E)$
w.r.t. 
$dN \times P$. 
Since 
supp $g_1 \subset \{ |\lambda| \le 1 \}$, 
by setting 
$N (H_n, J)
:=
\sharp \{
\mbox{ eigenvalues of $H_n$ in $J$ } \}$, 
we have for 
$c \ge 1$, 
\begin{eqnarray}
| G_n (E) |
& \le &
\| g_1 \|_{\infty}
\| g_2 \|_{\infty}
N 
\left(
H_n, 
\sqrt{E_0}
+
\frac 1n
(-c,c)
\right)
\nonumber
\\
& \le &
\| g_1 \|_{\infty}
\| g_2 \|_{\infty}
\frac {1}{\pi}
\left(
\Theta_t^{(n)} (c)
-
\Theta_t^{(n)} (-c)
\right)
%\quad\cdots (*)
\label{suffice}
\end{eqnarray}
so that it suffices to show the uniform integrability of 
$\left\{
\Theta_t^{(n)} (c)
\right\}_n$
w.r.t. 
$d N \times P$, 
which in turn follows from either one of the following two statements. 
\begin{eqnarray}
&(1)& \quad
\int_J d N(E)
{\bf E}[ 
\Theta_t^{(n)} (c)
]
\to
\int dN(E)
{\bf E}[ 
\Theta_t (c)
]
\label{uint1}
\\
&(2)& \quad
\sup_n 
\int_J d N(E) 
{\bf E}
\left[ 
\Theta_t^{(n)} (c)^{1+\delta}
\right] 
\quad
\mbox{ for some }
\;
\delta > 0
\label{uint2}
\end{eqnarray}
where we note that 
$\Theta_t (c) \ge 0$
for 
$c \ge 0$.
We shall show 
(\ref{uint1})
or
(\ref{uint2})
in Section 4.2.1, 4.2.2 
for super-critical and critical cases respectively.
For 
sub-critical case we can show the uniform integrability directly to be done in Section 4.2.3.
%

%
%%%%%%%%%%%%%%%%%%%%%%%%%%%%%%%%%%%%
\subsubsection{
Supercritical case
}
We show 
(\ref{uint2})
in 
super-critical case. 
By definition, 
\beq
\Theta_t^{(n)}(c)
&=&
ct
+
\widetilde{\theta}_{nt} (\kappa_c)
-
\widetilde{\theta}_{nt} (\kappa)
\eeq
and we write 
$\kappa := \kappa_0$
in this section.
By the integral equation 
eq.(\ref{Prufer1}) 
satisfied by 
$\widetilde{\theta}_t (\kappa_c)$, 
\beq
\widetilde{\theta}_{nt} (\kappa_c)
-
\widetilde{\theta}_{nt} (\kappa)
&=&
\frac {1}{2 \kappa}
Re 
\left(
J^{(n)}_t (\kappa_c)
-
J^{(n)}_t (\kappa)
\right)
\\
&& +
\frac {
- 2 \cdot \frac cn
}
{
2 \kappa_c \cdot 2 \kappa
}
\int_0^{nt}
Re
\left(
e^{2i \theta_s(\kappa_c)}- 1 
\right)
a(s) F(X_s) ds.
\\
\mbox{where }
\quad
J^{(n)}_t (\kappa_c)
&:=&
\int_0^{nt}
a(s) 
e^{2i \theta_s (\kappa_c)}
F(X_s) ds
\eeq
Second term goes to 
$0$
as 
$n \to \infty$ 
uniformly w.r.t. 
$(\kappa, \omega)$, 
so that it suffices to show the uniform integrability of 
$J_t^{(n)}(\kappa_c)$
for any 
$c \ge 0$.
We use ``Ito's formula" 
\beq
e^{2i \kappa s} F(X_s) ds
=
d \left(
e^{2i \kappa s} g_{\kappa} (X_s) 
\right)
-
e^{2i \kappa s} 
\nabla g_{\kappa} (X_s) 
d X_s
\eeq
and compute the integral by parts : 
\beq
J^{(n)}_t (\kappa_c)
&=&
\left[
a(s) 
e^{2i \theta_s(\kappa_c)}
g_{\kappa} (X_s) 
\right]_0^{nt}
\\
&&
-
\int_0^{nt}
a'(s) e^{2i \theta_s(\kappa_c)}
g_{\kappa} (X_s) ds
\\
&&
-
\frac {2i}{2 \kappa_c}
\int_0^{nt}
Re
\left(
e^{2i \theta_s(\kappa_c)} - 1
\right)
e^{2i \theta_s(\kappa_c)}
a(s)^2 F(X_s)
g_{\kappa} (X_s) ds
\\
&&
- 2i \cdot \frac cn
\int_0^{nt} 
a(s)
e^{2i \theta_s(\kappa_c)}
g_{\kappa} (X_s) ds
\\
&&
-
\int_0^{nt}
a(s)
e^{2i \theta_s(\kappa_c)}
\nabla g_{\kappa} (X_s) 
dX_s
\\
&=:&
J_1 + \cdots + J_5.
\eeq
Here 
we use the notation 
${\cal O}(1)$
if the quantity in question is uniformly bounded w.r.t. 
$(\kappa, \omega) \in J \times \Omega$.
Then we have 
\beq
J_1
&=&
{\cal O}(1)
\\
|J_2|
& \le &
\int_0^{nt} a'(s)
\left|
e^{2i \theta_s(\kappa_c)} g_{\kappa} (X_s) 
\right| 
ds
\le 
(Const.)
\int_0^{nt} a'(s) ds
=
{\cal O}(1)
\\
|J_3|
& \le &
(Const.)
\int_0^{nt} a(s)^2 ds
=
{\cal O}(1)
\\
|J_4|
&\le &
(Const.)
\frac 1n
\int_0^{nt} a(s)ds
=
{\cal O}(n^{-\alpha})
\\
\langle 
| J_5 |^2
\rangle
& \le &
(Const.)
\int_0^{nt}
a(s)^2 ds
=
{\cal O}(1). 
\eeq
Getting together we have 
\beq
\sup_n
\int_J dN (E)
{\bf E}
\left[
| J_t^{(n)} (\kappa_c) |^2
\right]
<
\infty. 
\eeq
%

%%%%%
\subsubsection{Critical case}
We show 
(\ref{uint1})
for the critical case.
In fact,
\cite{KN1} Lemma 6.3 says
\begin{eqnarray}
\Theta_t^{(n)}(c)
&=&
2 ct 
+ 
Re\; \epsilon_t^{(n)}
+
\frac {1}{\kappa} 
Re \, V_t^{(n)}(c)
+
\frac {1}{\kappa}
Re
\left(
\delta_{nt}(\kappa_c) - \delta_{nt}(\kappa)
\right)
\nonumber
\\
\mbox{ where }
\quad
|\epsilon_t^{(n)}|
& \le &
\frac Cn
\int_0^{nt} a(s) ds
\stackrel{n \to \infty}{\to} 0
\nonumber
\\
V_t^{(n)}(c)
&:& 
\;
\mbox{ Martingale so that } 
\quad
{\bf E}[ V_t^{(n)} (c) ] = 0
\nonumber
\\
&{\bf E}&
\left[
\max_{0 \le t \le T}
\left|
\delta_{nt}(\kappa_c) - \delta_{nt}(\kappa)
\right|^2
\right]
\stackrel{n \to \infty}{\to} 0
%\quad\cdots (*)
\label{delta}
\end{eqnarray}
which implies 
%
%\beq
$
{\bf E}[ \Theta_1^{(n)} (c) ] 
\stackrel{n \to \infty}{\to}
{\bf E}[ \Theta_1 (c) ] = 2 ct.
$
%\eeq
%
Moreover, 
by examining the proof of Lemma 6.3 in \cite{KN1}, 
the LHS of 
(\ref{delta})
is locally bounded w.r.t. 
$E$, 
and so is 
${\bf E}[ \Theta_1^{(n)}(c) ]$.
By the bounded convergence theorem, we now have 
\beq
\int_J d N(E)
{\bf E}[ \Theta_1^{(n)} (c) ] 
\to
\int_J d N(E)
{\bf E}[ \Theta_1 (c) ] 
= 2 ct. 
\eeq
%
%%%%%%%%%%%%%%%%%%%%%%%%%%%%%%%%%%%%
\subsubsection{Sub-critical case}
We show 
the uniform integrability directly for the sub-cricial case.
By (\ref{suffice}), 
it suffices 
to show the uniform integrability of 
$N 
\left(
H_n, 
\sqrt{E_0}
+
\frac 1n
(-c,c)
\right)
=
\xi_n ((-c, c) \times [0,1])$.
In Section 2, 
it has been shown that 
%
%\beq
$
\lim_n
{\bf E}[ 
\xi_n (I \times B)
]
=
{\bf E}[ 
\xi (I \times B)
].
$
%\eeq
%
The quantities in LHS are all locally bounded for 
$E$.
In fact, 
$\xi_n (I \times B)$ 
is governed by the number of jump points of 
$t \mapsto \left\lfloor
\Theta_{n t}(c)/ \pi 
\right\rfloor$, 
and the SDE satisfied by 
$\Theta_t (c)$
is determined by 
$E$
and 
$\langle F g_{\sqrt{E}} \rangle$
only, and 
$\langle F g_{\sqrt{E}} \rangle$
is bounded for 
$E \in J$.
Here 
we used the condition that the left-end 
$a$ 
of the interval 
$J$ 
is positive.
Then 
by the bounded convergence theorem, we have 
\beq
\lim_{n \to \infty}
\int_J d N(E)
{\bf E}[ \xi_n I \times B ]
=
\int_J d N(E)
{\bf E}[ \xi I \times B], 
\quad
J = [a,b]. 
\eeq
%

%%%%%% Acknowledgement %%%%%%%%%%%%%%%
\vspace*{1em}
\noindent {\bf Acknowledgement }
%This work is partially supported by 
%Grant-in-Aid for Scientific Research (C) no.26400145.
This work is partially supported by 
JSPS KAKENHI Grant 
Number 20K03659(F.N.).

%%%%% REFERENCES %%%%%%%%%%%%%%%%%%%%%
%
\small


\begin{thebibliography}{99}
\bibitem{AD}
Allez, R., Dumaz, L., : 
From sine kernel to Poisson statistics, 
Elec. J. Prob. {\bf 19}(2014), 1-25.

\bibitem{GK}
Germinet, F., and Klopp, F. : 
Spectral statistics for random Schr\"odinger operators in the localized regime, 
J. Eur. Math. Soc. {\bf 16}(2014), 1967-2031.

\bibitem{K}
Kallenberg, 
Random measures, 
Academic Press. 


\bibitem{KiN}
Killip, R., Nakano, F., : 
Eigenfunction statistics in the localized Anderson model,
Annales Henri Poincar\'e. {\bf 8}, no.1 (2007), 27-36.
%

%
\bibitem{KN1}
Kotani, S., Nakano, F., : 
Level statistics for the one-dimensional Schr\"odinger operator with random decaying potentials, 
Interdisciplinary Mathematical Sciences, 
Vol. 17(2014), 343-373.
%
%
\bibitem{KN2}
Kotani, S., Nakano, F., : 
Poisson statistics for the one-dimensional Schr\"odinger operator with random decaying potentials, 
Elec. J. of Prob. {\bf 22}(2017), no.69, 1-31.
%
%
\bibitem{KU}
Kotani, S. Ushiroya, N. :
One-dimensional Schr\"odinger operators with random decaying 
potentials, 
Commun. Math. Phys. {\bf 115}(1988), 247-266. 
%

%
\bibitem{N1}
Nakano, F., :   
Distribution of localization centers in some discrete random systems, 
Rev. Math. Phys. {\bf 19}(2007), 941-965. 
%
\bibitem{N2}
Nakano, F., : 
Level statistics for one-dimensional Schr\"odinger operators and Gaussian beta ensemble, 
J. Stat. Phys. {\bf 156}(2014), 66-93.
%
\bibitem{N3}
Nakano, F., : 
Limit of Sine$_{\beta}$ and Sch$_{\tau}$ processes, 
RIMS Kokyuroku, {\bf 1970}(2015), 83 - 89.
%
\bibitem{N4}
Nakano, F., : 
The 
scaling limit of eigenfunctions for 1d random Schr\"odinger operator, 
RIMS Kokyuroku
%数理解析研究所講究録、「量子場の数理とその周辺」
{\bf 202187}(2019), 
%2019年6月、
p.155-161. 
arXiv : 1912.01436



%
\bibitem{RV}
Rifkind, B.,  Vir\'ag, B, : 
Eigenvectors of the 1-dimensional critical random Schr\"odinger operator, 
Geom. Funct. Anal. 
{\bf 28} (2018), 1394-1419. 
%
\bibitem{VV}
Valk\'o, B., Vir\'ag, V. : 
Continuum limits of random matrices and the Brownian carousel, 
Invent. Math. {\bf 177}(2009), 463-508.
%



\end{thebibliography}
\end{document}